\documentclass[pdflatex,sn-mathphys-num]{sn-jnl}

\usepackage{graphicx}%
\usepackage{multirow}%
\usepackage{amsmath,amssymb,amsfonts}%
\usepackage{amsthm}%
\usepackage{mathrsfs}%
\usepackage[title]{appendix}%
\usepackage{xcolor}%
\usepackage{textcomp}%
\usepackage{manyfoot}%
\usepackage{booktabs}%
\usepackage{algorithm}%
\usepackage{algorithmicx}%
\usepackage{algpseudocode}%
\usepackage{listings}%
\usepackage{placeins}%

\theoremstyle{thmstyleone}%
\newtheorem{theorem}{Theorem}%
\newtheorem{corollary}[theorem]{Corollary}%

\theoremstyle{thmstyletwo}%

\theoremstyle{thmstylethree}%

\begin{document}

\title[Modulation theory of KP solitons]{Modulation theory of Y- and X-wave Kadomtsev--Petviashvili solitons}

\author*[1]{\fnm{Jerry} \sur{Wang}}\email{jerry.wang@colorado.edu}
\author*[1]{\fnm{Mark A.} \sur{Hoefer}}\email{hoefer@colorado.edu}

\affil*[1]{\orgdiv{Department of Applied Mathematics}, \orgname{University of Colorado}, \orgaddress{\city{Boulder}, \state{Colorado}, \country{USA}}}

\abstract{
The evolution of long wavelength perturbations of traveling wave 2-soliton
  solutions of the Kadomtsev-Petviashvili II (KP) equation are
  investigated using modulation theory.  An initial-boundary value problem for a system of hyperbolic conservation laws is formulated for amplitude and slope perturbations to soliton legs of the resonant (Y-shaped) and X-shaped KP traveling wave solutions. The vertex where individuated soliton legs coalesce is modeled as a moving free boundary where information between the soliton legs can pass. The linearized hyperbolic system is proven to be well-posed under suitable initial conditions and is solved exactly. In the case of the resonant Y-shaped soliton, all admissible simple (rarefaction) waves of the nonlinear system are characterized. Specific perturbations of Y-waves can lead to the dimensionality of the modulated solution changing, corresponding to a physical geometric change to the multi-soliton pattern. Spontaneous leg deletion and leg creation can occur. The conditions under which the Y-wave modulation system can experience gradient catastrophe are given.  
  Gradient catastrophe is regularized by the formation of a new leg,
  with an augmented modulation system proposed to describe the growth of the new leg. On the other hand, the X-wave system can be reduced to two independent line soliton modulation systems. These findings reinforce the prevalence and generality of resonant Y-shaped vertices in the KP II system and applications.The theoretical modulation predictions are supported by direct numerical simulations of KP II, which show good agreement with the analysis.
}

\keywords{Kadomtsev--Petviashvili equation, KP II, soliton modulation, Y-solitons, X-solitons, rarefaction waves, shock formation}

\maketitle
\section{Introduction}
A fundamental problem of multidimensional nonlinear wave theory is the oblique interaction of solitary waves.  This problem has been studied using the multi-soliton solutions of the Kadomtsev–Petviashvili (KP) equation \cite{miles_obliquely_1977,biondini_line_2007,chakravarty_soliton_2009,kodama_kp_2010}, a universal model of weakly nonlinear and dispersive waves with weak transverse variation \cite{kadomtsev_stability_1970}.   More recent work has introduced the alternative approach of soliton modulation in which the interactions are described locally by line soliton (one-soliton) solutions subject to transverse modulation \cite{ryskamp_evolution_2021,ryskamp_modulation_2022}. The modulation equations are a system of hyperbolic conservation laws that asymptotically describe gradual changes in the line soliton's parameters, e.g., its amplitude and slope.  Expansive modulation solutions correspond to curved soliton fronts whereas compressive modulations result in modulation shock solutions that describe two-soliton traveling wave solutions such as the resonant Y-soliton and non-resonant X-soliton.  Modulation theory allows for the solution of initial value problems consisting of partial or truncated solitonic data whereas exact solutions do not.  Follow-on work includes the solution of the Riemann problem for the KP soliton modulation equations in which two partial solitons with different parameters meet at the origin \cite{han_regularizations_2026}.  The modulation approach does not require the existence of multi-soliton solutions and therefore applies to non-integrable equations such as the Benney-Luke \cite{hu_obliquely_2025} and two-dimensional Benjamin-Ono \cite{hu_diffraction_2024} equations.

The aim of this paper is to consider modulation configurations consisting of multiple interaction types, both compressive and expansive waves, a class of \emph{multi-modulation solutions}.  By doing so, we are able to describe modulations of multi-soliton solutions of the KP equation.  The main idea is to treat modulation shock solutions---the Y- and X-solitons---as conjoined, semi-infinite modulated solitons with a boundary that represents the vertex, allowing for the transfer of information between the soliton legs.  Our analysis demonstrates that the initial-boundary value problem is well-posed and predicts the spontaneous removal or creation of one leg of the resonant Y-soliton.

The paper is organized as follows.  In Sec.~\ref{sec:background}, we review relevant exact solutions (line solitons, Y-solitons, and X-solitons) to the KP II equation along with modulation theory for line soliton solutions. In particular, special modulation rarefaction wave solutions are introduced.  In Sec.~\ref{sec:modul-y-solit},  we formulate modulations of the Y-soliton in terms of an initial-boundary value problem on each of the three Y-soliton legs with the vertex serving as the free boundary. Following that, we prove the well-posedness of the linearized initial-boundary value problem for Y-soliton modulations. In Sec.~\ref{sec:ySoliModulations}, we classify the dynamics for a slow or fast rarefaction wave originating on each of the Y-soliton legs.  In Sec.~\ref{sec:xSoliModulations}, we formulate and classify modulations for X-solitons, which we treat as essentially two independent line soliton modulation systems, i.e., the X-soliton vertex is trivial.  Lastly, numerical results and comparisons are discussed in Sec.~\ref{sec:direct-numer-simul} demonstrating the effectiveness of modulation theory followed by some concluding remarks in Sec.~\ref{sec:conclusions}. 

\section{Background}
\label{sec:background}

The KP equation is a two-dimensional generalization of the well-known one-dimensional Korteweg–de Vries (KdV) equation. Both KP and KdV can be derived as asymptotic limits from more fundamental fluid models such as the Euler equations. The KP equation has two dispersion curvature signs, either positive (KP I) or negative (KP II).  We will exclusively consider 
the KP II equation in the form
\begin{equation}
    (u_t + uu_x + u_{xxx})_x + u_{yy} = 0,
    \label{eq:KPII}
\end{equation}
with stable line soliton solutions \cite{kadomtsev_stability_1970,alexander_transverse_1997,mizumachi_stability_2012} in the form
\begin{equation}
    u(x, y, t) =a\operatorname{sech}^2\left(\sqrt{\frac{a}{12}}(x+qy-c t - x_0)\right), \quad c=\frac{a}{3}+q^2,
    \label{eq:line_soli}
\end{equation}
where $a>0$ is the amplitude parameter, $q = \tan \phi \in \mathbb{R}$ is the slope as measured by the counterclockwise angle $\phi$ from the positive $y$ axis, and $x_0$ is the $x$-intercept of the overall soliton phase at $t=0$.  See Fig.~\ref{fig:kp_line_soliton}.

\begin{figure}[h!]
    \centering
    \includegraphics[clip=true,trim=0 25 0 35]{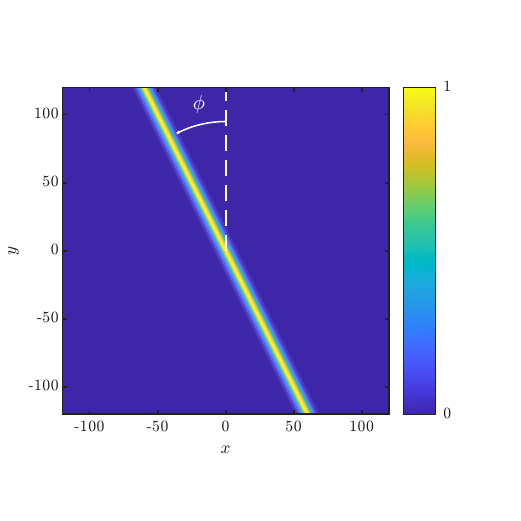}
    \caption{Example KP line soliton solution with $q = \tfrac12$ and $a = 1$.}
    \label{fig:kp_line_soliton}
\end{figure}

In addition to the line soliton solution, the KP II equation admits a family of multi-soliton solutions in the form \cite{biondini_soliton_2006,kodama_kp_2010}
\begin{equation}
    \label{eq:NSolitonSolution}
    u(x, y, t) = 12 \frac{\partial^2}{\partial x^2} \ln \tau(x, y, t),
\end{equation}
where $\tau(x, y,t)$ is defined as a Wronskian of exponentials,
\begin{gather*}
    \tau(x, y, t) = Wr(f_1, ..., f_N), \quad f_n(x, y, t) = \sum_{m=1}^M A_{n,m} e^{\theta_m (x, y, z)},\\
    \theta_m(x, y, t) = \frac{1}{\sqrt{12}} \left( k_m x + \frac{k_m^2}{2} y - \frac{k_m^3}{3} t\right),
\end{gather*}
which exhibit $N$ and $M-N$ outgoing line solitons as $y\to \infty$ and $y\to -\infty$, respectively. This solution is uniquely determined by the ordered phase parameters $k_1 < \dots<k_m$ and a suitable coefficient matrix $A=(A_{n,m})$. 

In the following subsections, we elaborate on a subset of multi-soliton solutions that are traveling waves.  The following class of traveling wave solutions are those that we will modulate.

\subsection{Line soliton and its modulation}
\label{sec:line_soliton}

We can recover the line soliton solution \eqref{eq:line_soli} from the general form \eqref{eq:NSolitonSolution} with $(M,N)=(2,1)$, $k_1 < k_2$ and 
\begin{equation}
    \label{eq:LineSoliMatrix}
    A = \begin{pmatrix}
        1 & \alpha
    \end{pmatrix} , \quad \alpha > 0,
\end{equation}
where $\alpha$ is related to the $x$-intercept parameter as 
$$x_0 = \frac{\sqrt{12}}{k_1-k_2}\ln \alpha.$$
The amplitude and slope parameters of the line soliton can be expressed in terms of the phase parameters as follows
\begin{equation}\label{eq:LineSolitonParameters}
    a=\frac{1}{4}(k_--k_+)^2, \quad q=\frac{1}{2}(k_-+k_+) ,
\end{equation}
where we have identified $k_- = k_1$ and $k_+ = k_2$.  

We now provide an overview of modulation theory that has been developed for line solitons; see \cite{ryskamp_evolution_2021,ryskamp_modulation_2022}.  The two-parameter family of line soliton solutions \eqref{eq:line_soli} is modulated by allowing for slow variation of its two parameters in the direction transverse to its propagation.  Consequently, the modulated solution locally corresponds to a soliton with parameters $(k_-,k_+)$ that vary on the slow scales $Y = \epsilon y$, $T = \epsilon t$:  $k_\pm = k_\pm(Y,T)$ with $0 < \epsilon \ll 1$.  The line soliton modulation equations are \cite{ryskamp_evolution_2021}
\begin{subequations}
    \label{eq:mod_line_riemann}
    \begin{align}
        \label{eq:mod_line_k1}
        \partial_T k_\pm + \lambda_\pm \partial_Y k_\pm &= 0,
    \end{align}
    with characteristic velocities
    \begin{equation}\label{eq:lineSoliVelocity}
        \lambda_\pm(k_-,k_+)  = \frac{4}{3}k_\pm + \frac{2}{3}k_\mp .
    \end{equation}
\end{subequations}
Note that $k_\pm$ are Riemann invariants for the modulation system \eqref{eq:mod_line_riemann}.

With initial/boundary conditions provided for the modulation system \eqref{eq:mod_line_riemann}, its solution enables the construction of an approximate solution to KP \eqref{eq:KPII} as
\begin{gather}
    u(x,y,t) \sim  a(\epsilon y,\epsilon t) \operatorname{sech}^2\left(\sqrt{\frac{a(\epsilon y,\epsilon t)}{12}}\theta(x,y,t)\right), \\
    \theta(x,y,t) = x + \frac{1}{\epsilon}\int_0^{\epsilon y} q(Y',\epsilon t)\,\mathrm{d}Y' - \frac{1}{\epsilon} \int_0^{\epsilon t}(a(\epsilon y,T')/3 -q^2(\epsilon y,T'))\,\mathrm{d}T' .\label{eq:modulatedPhase}
\end{gather}
The parameter $1/\epsilon$ sets the large, hydrodynamic spatio-temporal scale for applicability of modulation theory.  Alternatively, one may set $\epsilon = 1$ and consider large $t$ and $y$ modulations, which we will do in the remainder of this paper.

Equation \eqref{eq:mod_line_riemann} admits two families of self-similar rarefaction wave (RW) solutions:  slow-RWs and fast-RWs.  For the slow-RWs, $k_+ = $ constant whereas for the fast-RWs, $k_- = $ constant. These are admissible weak solutions to the Riemann problem so long as the non-constant Riemann invariant is non-decreasing across the wave.

The slow-RW solution is 
\begin{subequations}
    \label{eq:1-RW}
    \begin{align}
         &k_-(y,t) = 
        \begin{cases}
            K^{l}_-& y-y_0 < \lambda_-(K_-^l,K_+) t, \\
            \frac{3(y-y_0)}{4t} - \frac{1}{2}K_+ & \lambda_-(K_-^{l}, K_+) t < y-y_0 < \lambda_-(K_-^{u}, K_+) t, \\
            K^u_- & y-y_0>\lambda_-(K_-^u,K_+)t, \\ 
        \end{cases}\\
        &k_+(y,t) \equiv K_+,
    \end{align}
\end{subequations}
which is admissible so long as $K_-^l < K_-^u < K_+$.  Translating admissibility to the physical parameters $(a,q)$ in eq.~\eqref{eq:LineSolitonParameters}, a slow-RW represents a continuous transition between the constant lower parameter values $(a^l,q^l)$ and constant upper values $(a^u,q^u)$ subject to the conditions
\begin{equation}\label{1-RW}
        a^l > a^u,\quad\sqrt{a^l} - \sqrt{a^u} = q^u -q^l.
\end{equation} 

The fast-RW solution is
\begin{subequations}
    \label{eq:2-RW}
    \begin{align}
         &k_- \equiv K_-,\\
         &k_+(y,t) = 
        \begin{cases}
            K_+^l & y-y_0 < \lambda_+(K_-,K_+^l) t, \\
            \frac{3(y-y_0)}{4t} - \frac{1}{2}K_- & \lambda_+(K_-,K_+^l) t < y-y_0 < \lambda_+(K_-,K_+^u) t, \\
            K_+^u & y-y_0> \lambda_+(K_-,K_+^u)t ,
        \end{cases}
    \end{align}
\end{subequations}
which is admissible when $K_- < K_+^l < K_+^u$.  In terms of the physical parameters $(a,q)$ in eq.~\eqref{eq:LineSolitonParameters}, a fast-RW is a continuous transition between the constant lower parameter values $(a^l,q^l)$ and constant upper values $(a^u,q^u)$ subject to the conditions
\begin{equation}\label{2-RW}
        a^l < a^u, \quad \sqrt{a^u} - \sqrt{a^l} = q^u -q^l.
\end{equation}

\subsection{Y-soliton}
\label{sec:y_soliton}

The Y-soliton solution corresponds to either $(M,N)=(3,1)$ or $(M,N)=(3,2)$ depending on whether it has one outgoing soliton as $y \to \infty$ or $y \to -\infty$, respectively. This is a three parameter family of solutions determined by $k_1 < k_2 < k_3$ and 
\begin{equation}
    \label{eq:y_soliton_A_matrix}
    A = \begin{pmatrix}
        1 & 0 & -\beta \\
        0 & 1 & \alpha 
    \end{pmatrix}, \quad \alpha, \beta > 0,
\end{equation}
for the $N = 2$ case.  
We set $\alpha = \frac{k_2-k_1}{k_3-k_2}$, $\beta = \frac{k_2-k_1}{k_3-k_1}$ so that the Y-soliton vertex at $t=0$ is located at the origin $(0, 0)$.
\begin{figure}[h!]
    \centering
    \includegraphics{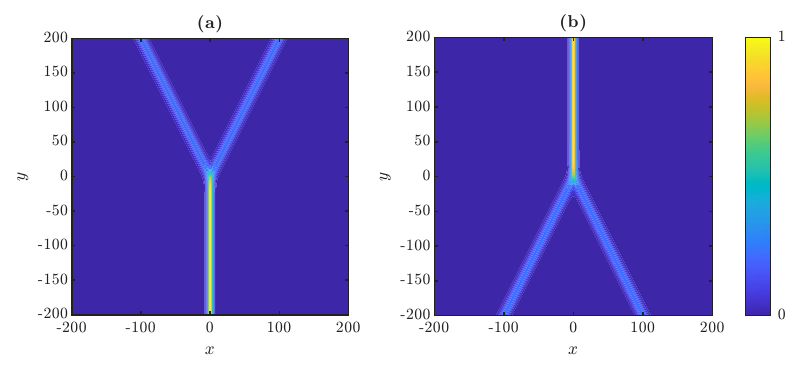}
    \caption{In this figure, \textbf{(a)} shows the $M=3, N=2$ case while \textbf{(b)} shows the $M=3, N=1$ case. In both cases, $(k_1,k_2,k_3) = (-1, 0,1)$.}
    \label{fig:twoYs}
\end{figure}

For this paper, we will focus our analysis on the $(M,N)=(3,2)$ Y-soliton that is ``upright''. Note that our analysis extends to the ``inverted'' Y-soliton, the $(M,N)=(3,1)$ case by the $y\to-y$ invariance of the KP equation \eqref{eq:KPII}.

As $y \to -\infty$ the Y-soliton solution asymptotes to the line soliton with parameters $(k_1,k_3)$ so we label this outgoing soliton by its non-dominant parameter as leg $j = 2$.  The other two asymptotic soliton legs as $y \to \infty$ are labeled similarly; see Fig.~\ref{fig:YSoliLabeling}.  This amounts to a counterclockwise labeling convention in which the upper leftmost outgoing soliton is indexed as $j = 1$.  We label the physical parameters for each leg by the leg index.  For example, $(a^{(1)},q^{(1)})$ correspond to the amplitude and slope of leg $j = 1$ so that eq.~\eqref{eq:LineSolitonParameters} implies $a^{(1)} = (k_3-k_2)^2/4$ and $q^{(1)} = (k_2+k_3)/2$.
\begin{figure}[h!]
    \centering
    \includegraphics{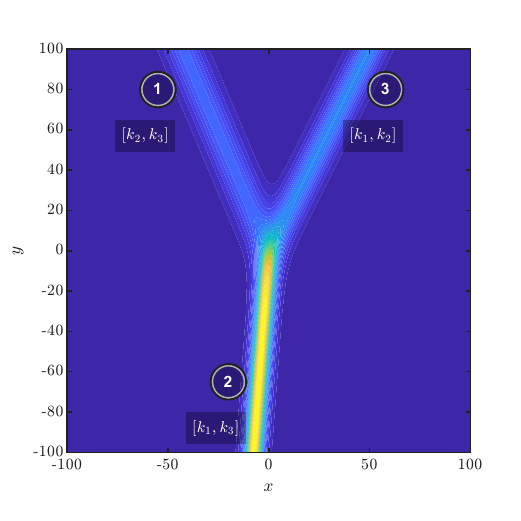}
    \caption{The circled number denotes leg $j$, while the $[k_m, k_n]$ represents the two dominant leg parameters on that leg.}
    \label{fig:YSoliLabeling}
\end{figure}

The velocity $\mathbf{s}$ of the Y-soliton vertex where the soliton legs meet is obtained by setting $\theta_1 = \theta_2 = \theta_3$, which yields
\begin{equation} \label{eq:ySoliVel}
    \mathbf{s} = (s_x,s_y) = \left(-\frac{1}{3}(k_1k_2+k_1k_3+k_2k_2),\frac{2}{3}(k_1+k_2+k_3) \right).
\end{equation}
This confluence of three partial solitons at the vertex have led researchers to characterize the Y-soliton as \textit{resonant} \cite{miles_resonantly_1977}.  This resonance arises in the Mach reflection problem in which a slightly oblique soliton reflects off of a solid boundary.  The two soliton legs as $y \to \infty$ correspond to the incident and reflected solitons.  The soliton leg as $y \to -\infty$ has largest amplitude $a^{(2)} = \tfrac{1}{4}(k_3 - k_1)^2$ and is called the ``Mach stem'', which is perpendicular to the boundary.

\subsection{X-soliton}
\label{sec:x_soliton}

There are two types of X-soliton solutions that are traveling waves: the O-type and P-type, also known the 2143 soliton and 4321, respectively \cite{kodama_kp_2010}.  They both correspond to $(M,N)=(4,2)$ in the tau function formulation \eqref{eq:NSolitonSolution} with different coefficient matrices.  The two outgoing line solitons as $y\to \infty$  consist of the parameters $k_1 < k_2 < k_3 < k_4$.  The two outgoing line solitons as $y\to -\infty$ have the same parameters so that these represent elastic 2-soliton interactions.  The two cases are qualitatively distinguished by the nature of the interaction.  For the X-soliton solutions, we will take a more direct approach to labeling the legs with a tuple corresponding to the two dominant phase parameters on each leg. 

The O-type soliton has the coefficient matrix
\begin{equation}\label{eq:OTypeMatrix}
    A = \begin{pmatrix}
        1 & \alpha & 0 &0\\
        0 & 0 & 1 & \beta
    \end{pmatrix} \quad \alpha,\beta >0,
\end{equation}
where we set $\alpha = \frac{k_3-k_1}{k_3-k_2}$, $\beta= \frac{k_3-k_1}{k_4-k_1}$ so that the vertex is located at $(0,0)$. The vertex velocity is
$\mathbf{s} = (s_x,s_y)$, where its components
\begin{align}
s_x &=
\frac{
    (k_1+k_2)\bigl(k_4^2+k_3k_4+k_3^2\bigr)
    -
    (k_3+k_4)\bigl(k_2^2+k_1k_2+k_1^2\bigr)
}{
    3\,(k_1+k_2-k_3-k_4)
},
\\
s_y &=
\frac{2}{3}\left[ k_1 + k_2 + k_3 + k_4
-
\frac{k_1 k_2 - k_3 k_4}
     {k_1+k_2-k_3-k_4} \right],\label{eq:O_type_yvel}
\end{align}
are obtained by the intersection of the phase-balance lines $\theta_1-\theta_2 = \text{const}$ and   $\theta_3-\theta_4 = \text{const}$.  An example O-type soliton is depicted in Fig.~\ref{fig:o_type_soliton}.  Qualitatively, the O-type soliton represents the approximate superposition of two oblique line solitons experiencing a ``head-on'' collision, so represents a somewhat weaker nonlinear interaction than the P-type soliton discussed next.  In the classification scheme of ref \cite{biondini_line_2007}, the O-type soliton is an ``ordinary'' two-soliton solution. In contrast to the Mach reflection associated with the Y-soliton, the O-type X-soliton soliton models the regular reflection of an oblique soliton off of a solid boundary \cite{miles_obliquely_1977}.  For this, only the two legs of the upper half of the O-type soliton should be considered where the rightmost leg is incident upon and the leftmost leg is reflected from the boundary.

\begin{figure}[h!]
    \centering
    \includegraphics{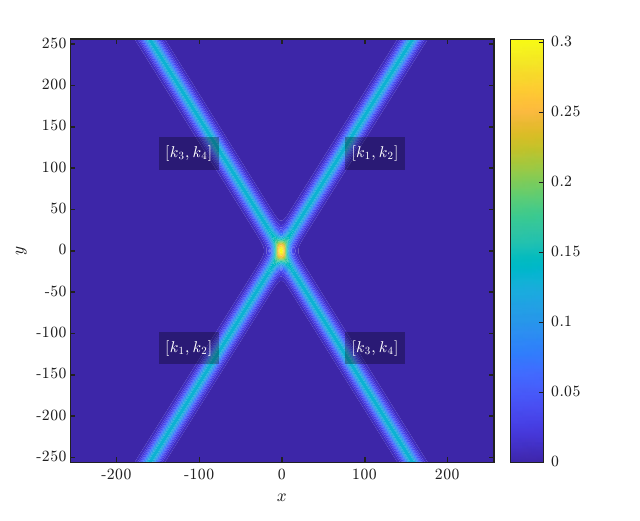}
    \caption{Example O-type soliton exhibiting a ``head-on'' collision between two line solitons.  The maximum of the solution is greater than the sum of the incoming soliton amplitudes.}
    \label{fig:o_type_soliton}
\end{figure}

The P-type soliton corresponds to the coefficient matrix
\begin{equation}\label{eq:PTypeMatrix}
    A = \begin{pmatrix}
        1 & 0 & 0 &-\beta\\
        0 & 1 & \alpha & 0
    \end{pmatrix} \quad \alpha,\beta >0
\end{equation}
where we set $\alpha = \frac{k_2-k_1}{k_3-k_1}$, $\beta = \frac{k_2-k_1}{k_4-k_2}$ so that the vertex is located at $(0,0)$. The vertex velocity $\mathbf{s} = (s_x,s_y)$ has components
\begin{align}
s_x &=
\frac{
  (k_1+k_4)\bigl(k_2^2 + k_2 k_3 + k_3^2\bigr)
  -
  (k_2+k_3)\bigl(k_1^2 + k_1 k_4 + k_4^2\bigr)
}{
  3\bigl(k_1+k_4-k_2-k_3\bigr)
},
\\
s_y &=
\frac{2}{3}
\left[
k_1+k_2+k_3+k_4
-\frac{k_1k_4-k_2k_3}{k_1+k_4-k_2-k_3}
\right],\label{eq:P_type_yvel}
\end{align}
that are obtained by intersecting the phase-balance lines $\theta_2-\theta_3 = \text{const}$ and   $\theta_1-\theta_4 = \text{const}$. An example P-type soliton is shown in Fig.~\ref{fig:p_type_soliton}.  Qualitatively, the P-type soliton represents a stronger nonlinear ``overtaking'' interaction between two line solitons \cite{miles_obliquely_1977}.  In \cite{biondini_line_2007}, the P-type soliton is classified as an ``asymmetric'' two-soliton interaction.  
\begin{figure}[h!]
    \centering
    \includegraphics{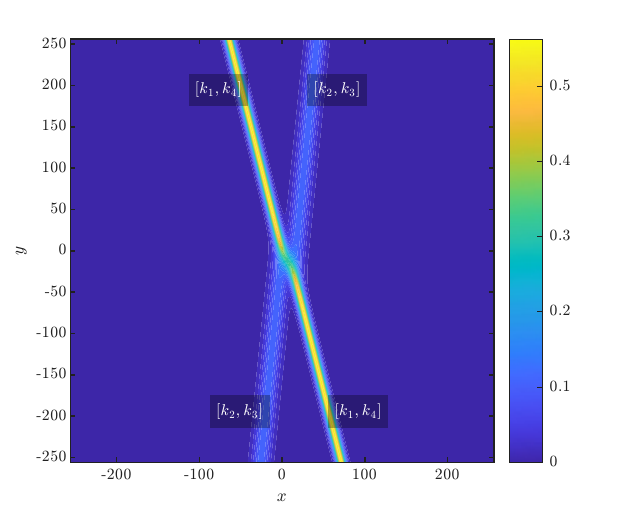} %
    \caption{Example P-type solitons exhibiting an ``overtaking'' interaction between two line solitons.  The interaction region exhibits amplitudes that are smaller than the sum of the incoming soliton amplitudes.}
    \label{fig:p_type_soliton}
\end{figure}

\section{Modulation equations of Y-soliton traveling wave solution}
\label{sec:modul-y-solit}

In order to describe modulations of Y-soliton solutions, we will first consider its zero dispersion limit.  This limit is obtained by the hydrodynamic rescaling $(x,y,t) \to (\epsilon x, \epsilon y, \epsilon t)$ of the KPII equation \eqref{eq:KPII} and therefore its multi-soliton solutions \eqref{eq:NSolitonSolution}.  The parameter $\epsilon > 0$ is proportional to the width of each soliton leg.  Consequently, soliton leg $j$ of a Y-soliton solution becomes localized along a line $x + q^{(j)} y - c^{(j)} t = $const in the zero dispersion limit $\epsilon \to 0$ (also known as the tropical limit) that results in a soliton graph \cite{kodama_kp_2010}. 
The three asymptotic line solitons that compose the Y-soliton are localized to three semi-infinite lines with a point of intersection at a vertex. At the vertex, the Y-soliton has been shown to exhibit a discontinuity in soliton leg parameters that satisfies certain Rankine-Hugoniot conditions \cite{ryskamp_modulation_2022}.  This representation of a Y-soliton motivates the treatment of the discontinuity as a boundary between each soliton leg.  Thus, Y-soliton modulation consists of line soliton modulation of each leg and appropriate boundary conditions at the Y-soliton vertex.

On leg $j$, two phase parameters $k_m^{(j)}$ and $k_n^{(j)}$, $m < n$ are dominant (cf.~Fig.~\ref{fig:YSoliLabeling}).  Because we treat leg $j$ as a semi-infinite line soliton, the modulated phase parameters satisfy eq.~\eqref{eq:mod_line_riemann}
\begin{subequations}
    \label{eq:mod_Y_riemann}
    \begin{gather}
        \label{eq:kDef}
        \frac{\partial k_{m,n}^{(j)}}{\partial t} + \lambda^{(j)}_{m,n} \frac{\partial k_{m,n}^{(j)}}{\partial x} = 0, \\
        \label{eq:velDef}
        \lambda^{(j)}_{m,n} = \frac{4}{3}k^{(j)}_{m,n} + \frac{2}{3}k^{(j)}_{n,m}, \quad 1\leq n<m\leq3,\quad n,m\neq j, \quad j\in \{1,2,3 \},
    \end{gather}
\end{subequations}
subject to a domain restriction defined by the leg positioning relative to the moving vertex.  It will be helpful to notate the characteristic velocity by an explicit ordering.  Since $n < m$, we will always have $\lambda_n^{(j)} < \lambda_m^{(j)}$ for $n,m \ne j$. So we will write $\lambda_-^{(j)} = \lambda_n^{(j)}$ for the slow characteristic velocity and $\lambda_+^{(j)} = \lambda_m^{(j)}$ for the fast characteristic velocity on leg $j$, with $n$ and $m$ implied. Now we impose boundary conditions.  Specifically, the boundary condition between legs $j=1$ and $j=2$ is continuity of the parameter $k_3$ at the vertex $y = p(t)$, legs $j=1$ and $j=3$ exhibit continuity of $k_2$, and between legs $j=2$ and $j=3$, the parameter $k_1$ is continuous. In short,
\begin{equation}\label{eq:boundaryCont}
        k^{(i)}_m(p(t), t) = k^{(j)}_m(p(t), t), \quad i,j,m\in \{1,2,3 \}, \quad m\notin \{i,j\} .
\end{equation}
Figure \ref{fig:eqLabeling} highlights the relevant shared phase parameter that is continuous between each pair of legs in the Y-soliton.  The vertex position $p(t)$ satisfies the ODE
\begin{equation}\label{eq:ODEDefOfP}
    \dot{p}(t) =\frac{2}{3}(k_1+k_2+k_3) |_{y=p(t)}, \quad p(0) = 0,
\end{equation}
where we need not distinguish the leg number for each of the phase parameters because of continuity \eqref{eq:boundaryCont}.  We also assume that the vertex is initially located at the origin $y = 0$.  
We thus write the three pairs of equations \eqref{eq:mod_Y_riemann} that govern modulations along the Y-soliton as
\begin{subequations}
    \label{eq:mod_param_k3}
    \begin{align}
        \label{eq:mod_param_k3_leg2}
        \partial_t k^{(2)}_3 + \lambda^{(2)}_+ \partial_y k^{(2)}_3 &= 0 \quad y<p(t),
        \\
        \label{eq:mod_param_k3_leg1}
        \partial_t k^{(1)}_3 + \lambda^{(1)}_+ \partial_y k^{(1)}_3 &= 0 \quad y>p(t),
    \end{align}
\end{subequations}

\begin{subequations}
    \label{eq:mod_param_k2}
    \begin{align}
        \label{eq:mod_param_k2_leg3}
        \partial_t k^{(3)}_2 + \lambda^{(3)}_+ \partial_y k^{(3)}_2 &= 0 \quad y>p(t),
        \\
        \label{eq:mod_param_k2_leg1}
        \partial_t k^{(1)}_2 + \lambda^{(1)}_- \partial_y k^{(1)}_2 &= 0 \quad y>p(t),
    \end{align}
\end{subequations}

\begin{subequations}
    \label{eq:mod_param_k1}
    \begin{align}
        \label{eq:mod_param_k1_leg3}
        \partial_t k^{(3)}_1 + \lambda^{(3)}_- \partial_y k^{(3)}_1 &= 0 \quad y>p(t),
        \\
        \label{eq:mod_param_k1_leg2}
        \partial_t k^{(2)}_1 + \lambda^{(2)}_- \partial_y k^{(2)}_1 &= 0 \quad y<p(t).
    \end{align}
\end{subequations}
Overall, eqs.~\eqref{eq:mod_param_k3}--\eqref{eq:mod_param_k1} are a set of hyperbolic equations for the six quantities $k^{(j)}_m$, where $j,m \in \{1,2,3 \}$, $j\neq m$ subject to continuity \eqref{eq:boundaryCont} at the boundary \eqref{eq:ODEDefOfP}.  To formulate a well-posed problem, we must supplement this system with initial conditions $k_m^{(j)}(y,0)$ on the corresponding domain, either $y > 0$ or $y < 0$.

\begin{figure}[h!]
    \centering
    \includegraphics{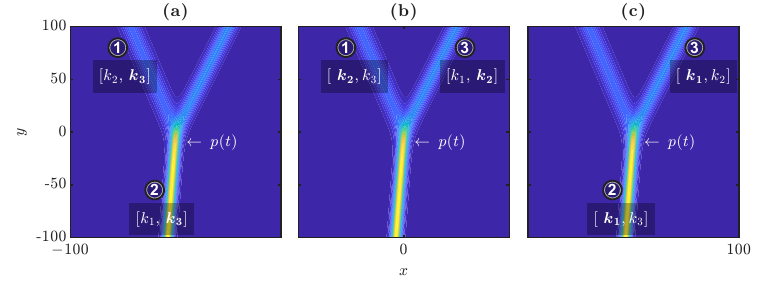}
    \caption{Depiction of Y-soliton modulation boundary conditions at the vertex $p(t)$.  \textbf{(a)} Continuity of $k_3$ between legs 1 and 2 corresponding to system \eqref{eq:mod_param_k3}. \textbf{(b)} Continuity of $k_2$ between legs 1 and 3 corresponding to system \eqref{eq:mod_param_k2}. \textbf{(c)} Continuity of $k_1$ corresponding to system \eqref{eq:mod_param_k1}.}
    \label{fig:eqLabeling}
\end{figure}

\subsection{Linearized modulation equations}
    We will start by analyzing the linearization of the system \eqref{eq:mod_param_k3}-\eqref{eq:mod_param_k1} in which each $k$ parameter is linearized by writing $k^{(j)}_m= K_m + \epsilon \bar{k}^{(j)}_m(y,t)$, where 
    $K_1<K_2<K_3$
    are constants and $0 < \epsilon \ll 1$ quantifies the small deviation from the constant value. 
    The modulation system reduces to a system of linear transport equations in the form 
    \begin{subequations}\label{eq:linearized_form_split}
        \begin{align}
            \partial_t \bar{k}^{(i)}_m + \bar{\lambda}_\pm^{(i)} \partial_y \bar{k}^{(i)}_m &= 0 \quad y>\bar{p}(t)\\
            \partial_t \bar{k}^{(j)}_m + \bar{\lambda}_\pm^{(j)} \partial_y \bar{k}^{(j)}_m &= 0 \quad y<\bar{p}(t)
        \end{align}
    \end{subequations}
    for the split domain cases $m = 1$ or $m = 3$ from equations \eqref{eq:mod_param_k3} and \eqref{eq:mod_param_k1}.  The remaining equation \eqref{eq:mod_param_k2} is written
    \begin{subequations}\label{eq:linearized_form_same}
        \begin{align}
            \partial_t \bar{k}^{(3)}_2 + \bar{\lambda}_+^{(3)} \partial_y \bar{k}^{(3)}_2 &= 0 \quad y>\bar{p}(t)\\
            \partial_t \bar{k}^{(1)}_2 + \bar{\lambda}_-^{(1)} \partial_y \bar{k}^{(1)}_2 &= 0 \quad y>\bar{p}(t) .
        \end{align}
    \end{subequations}
    The characteristic velocities in \eqref{eq:linearized_form_split} and \eqref{eq:linearized_form_same} are
    \begin{align}\label{eq:linearizedVelDef}
        \bar{\lambda}^{(j)}_+ = \frac{4}{3}K_m + \frac{2}{3}K_n, \quad \bar{\lambda}^{(j)}_- = \frac{4}{3}K_n + \frac{2}{3}K_m,
    \end{align}
    where $1\leq n<m\leq3$, $n,m\neq j$, $j\in \{1,2,3 \}$.  Linearizing \eqref{eq:ODEDefOfP} implies the boundary moves with constant velocity so we can recover its motion in advance to be
    \begin{equation} \label{eq:linearized_position}
       \bar{p}(t) = \frac{2}{3}(K_1+K_2+K_3)t .
    \end{equation}

    Equations \eqref{eq:linearized_form_split}--\eqref{eq:linearized_position} are accompanied by the initial conditions
    \begin{subequations}\label{eq:linearizedIC}
        \begin{align}
            & \bar{k}^{(i)}_m(y, 0) = \bar{k}^{(i)}_{m,0}(y), \quad i,m\in \{1,2,3 \}, \quad i \ne m,
        \end{align}
    \end{subequations}
    and continuity at the boundary
    \begin{equation} \label{eq:linearizedBC}
        \bar{k}^{(i)}_m(\bar{p}(t), t) = \bar{k}^{(j)}_m(\bar{p}(t), t), \quad i,j,m\in \{1,2,3 \}, \quad m\notin \{i,j\} .
    \end{equation}

    The initial-boundary value problem \eqref{eq:linearized_form_split}--\eqref{eq:linearizedBC} admits a unique solution. We state this as a theorem.
    \begin{theorem}\label{th:linearizedWellPosed}
        Assume the initial data for the linearized modulation system belong to $C^1$ and that they are compatible at the boundary. That is
        \[
        \bar{k}^{(1)}_{3,0}(y=0)=\bar{k}^{(2)}_{3,0}(y=0),\qquad
        \bar{k}^{(1)}_{2,0}(y=0)=\bar{k}^{(3)}_{2,0}(y=0),\qquad
        \bar{k}^{(2)}_{1,0}(y=0)=\bar{k}^{(3)}_{1,0}(y=0),
        \]
        and similar for the first spatial derivatives of $\bar{k}_{m,0}^{(i)}$.
        Then the initial-boundary value problem
        \eqref{eq:linearized_form_split}--\eqref{eq:linearized_form_same},
        subject to \eqref{eq:linearizedIC} and \eqref{eq:linearizedBC}, is well-posed.
    \end{theorem}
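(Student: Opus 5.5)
The plan is to exploit that \eqref{eq:linearized_form_split}--\eqref{eq:linearized_form_same} is a decoupled collection of six scalar constant-coefficient transport equations. Each lives on a half-line bounded by the straight line $y=\bar p(t)=st$, with $s=\tfrac23(K_1+K_2+K_3)$ from \eqref{eq:linearized_position}. The only coupling is through the three continuity conditions \eqref{eq:linearizedBC}. Well-posedness then reduces to a counting argument at the boundary: for each shared parameter $k_m$, exactly one of the two equations must carry information \emph{into} the boundary (outgoing characteristic) and the other must receive it (incoming characteristic). If so, each condition in \eqref{eq:linearizedBC} becomes a Dirichlet datum for the incoming equation, supplied by the explicitly known outgoing one. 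I would then write the solution in closed form by the method of characteristics.

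\textbf{Step 1 (sign of relative speeds).} Using \eqref{eq:linearizedVelDef}, I compute each $\bar\lambda^{(j)}_\pm - s$. For leg 1 ($y>\bar p$, parameters $K_2,K_3$): $\bar\lambda^{(1)}_+-s=\tfrac23(K_3-K_1)>0$ and $\bar\lambda^{(1)}_- - s=\tfrac23(K_2-K_1)>0$. Both characteristics therefore emanate from the boundary into the domain, so both are incoming. For leg 2 ($y<\bar p$, parameters $K_1,K_3$): $\bar\lambda^{(2)}_+-s=\tfrac23(K_3-K_2)>0$, so this characteristic runs into the boundary (outgoing); and $\bar\lambda^{(2)}_- -s=\tfrac23(K_1-K_2)<0$, which is incoming. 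For leg 3 ($y>\bar p$, parameters $K_1,K_2$): $\bar\lambda^{(3)}_+-s=\tfrac23(K_2-K_3)<0$ and $\bar\lambda^{(3)}_- - s=\tfrac23(K_1-K_3)<0$, so both are outgoing. The ordering $K_1<K_2<K_3$ makes all inequalities strict, so no characteristic is tangent to the boundary.

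\textbf{Step 2 (pairing and explicit solution).} Matching the pairs gives the following. For $k_3$, leg 2 is outgoing and leg 1 is incoming. For $k_2$, leg 3 is outgoing and leg 1 is incoming. For $k_1$, leg 3 is outgoing and leg 2 is incoming. So each pair has exactly one outgoing and one incoming member.
\begin{itemize}
\item The outgoing solutions are determined entirely by their initial data, for example $\bar k^{(2)}_3(y,t)=\bar k^{(2)}_{3,0}(y-\bar\lambda^{(2)}_+t)$ for $y<st$, since every backward characteristic reaches $t=0$ inside the domain.
\item For an incoming solution, consider a point $(y,t)$ in its domain. If its backward characteristic reaches $t=0$ before hitting the boundary, the initial data give the value. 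Otherwise the characteristic meets $y=st$ at the time $\tau=(y-\bar\lambda t)/(s-\bar\lambda)\in(0,t]$, and the value there is set equal to the outgoing partner's value at $(s\tau,\tau)$.
\end{itemize}
For example, $\bar k^{(1)}_3(y,t)=\bar k^{(2)}_{3,0}\big(s\tau-\bar\lambda^{(2)}_+\tau\big)$ in the region $st<y<\bar\lambda^{(1)}_+t$. These formulas give existence, and uniqueness follows because any classical solution must be constant along characteristics. Continuous dependence is immediate, since the map from data to solution is a composition with affine maps: $\sup_t\|\bar k\|_{C^0}\le\|\bar k_0\|_{C^0}$, and similarly in $C^1$ up to constants depending on the velocity ratios.

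\textbf{Step 3 (regularity across the separating characteristic).} I expect the main technical point to be checking that the piecewise formula for each incoming variable is $C^1$ across the characteristic $y=\bar\lambda t$ from the origin, which separates the region determined by initial data from the region determined by the boundary. Continuity there is exactly the zeroth-order compatibility $\bar k^{(i)}_{m,0}(0)=\bar k^{(j)}_{m,0}(0)$.

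For the derivatives I differentiate both formulas and compare their limits at $y=\bar\lambda t$. The boundary branch produces $\bar k^{(j)\prime}_{m,0}(0)$ multiplied by the factor $(s-\bar\lambda_{\rm out})/(s-\bar\lambda_{\rm in})$, which comes from the chain rule through $\tau$. The initial-data branch produces $\bar k^{(i)\prime}_{m,0}(0)$. I would check that the first-order compatibility assumed in the statement is what makes these agree. If the factor does not reduce to $1$, I would restate the first-order compatibility in the form the transport equations force, namely $(s-\bar\lambda_{\rm in})\,\partial_y\bar k^{(i)}=(s-\bar\lambda_{\rm out})\,\partial_y\bar k^{(j)}$ at the corner, which comes from differentiating \eqref{eq:linearizedBC} in $t$. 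Alternatively, I would accept a solution that is $C^1$ away from a single characteristic, which still yields well-posedness in the weak sense.
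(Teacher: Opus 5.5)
Your proposal is correct and is essentially the paper's proof. The paper also computes the speeds relative to $\bar p(t)$, working in the co-moving frame $\xi=y-\bar p(t)$, and observes that in each continuity pair exactly one family carries data into the vertex and the other carries it away. It then writes the solution by characteristics with the same formula you give, e.g.\ $\bar k_2^{(1)}=\bar k^{(3)}_{2,0}\bigl(\tfrac{v^{(3)}_+}{v^{(1)}_-}(\xi-v^{(1)}_-t)\bigr)$ behind the characteristic from the corner.

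Your Step 3 goes beyond the paper, which never checks regularity across that characteristic, and your fallback is the one actually needed. The factor $(s-\bar\lambda_{\mathrm{out}})/(s-\bar\lambda_{\mathrm{in}})$ equals $(K_3-K_2)/(K_3-K_1)$, $(K_2-K_3)/(K_2-K_1)$ and $(K_3-K_1)/(K_2-K_1)$ for the $k_3$, $k_2$, $k_1$ pairs, so it is never $1$. Hence the weighted corner condition $(s-\bar\lambda_{\mathrm{in}})\partial_y\bar k^{(i)}=(s-\bar\lambda_{\mathrm{out}})\partial_y\bar k^{(j)}$, not the equality of derivatives in the statement, is what gives a $C^1$ solution. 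Without it you get the piecewise-$C^1$ weak solution the paper mentions after its proof.
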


    \begin{proof}
    The proof of this follows by counting the number of characteristics into and out of the boundary, then constructing the exact solution.

    Consider the co-moving frame        
        \[
        c:=\frac23(K_1+K_2+K_3),\qquad \xi=y-\bar p(t)=y-ct.
        \]
        In the transformed $(\xi,t)$ variables, each equation is a linear transport equation
        \[
        \partial_t u + v_\pm^{(j)}\,\partial_\xi u=0,
        \qquad v_\pm^{(j)}=\bar\lambda_\pm^{(j)}-c,
        \]
        with constant coefficient $v$ on a fixed domain, either $\xi > 0$ or $\xi < 0$.
        
        The ordering $K_1<K_2<K_3$ implies that the six relative speeds have the following signs
\begin{subequations}\label{eq:linearizedCharSpeeds}
        \begin{align}
        \begin{split}
        v_+^{(1)} &= \frac{2}{3}(K_3 - K_1) > 0, \qquad
        v_-^{(1)} = \frac{2}{3}(K_2 - K_1) > 0
        \end{split}
        \label{eq:linearizedCharSpeeds_1}\\
        \begin{split}
        v_+^{(2)} &= \frac{2}{3}(K_3 - K_2) > 0, \qquad
        v_-^{(2)} = -\frac{2}{3}(K_2 - K_1) < 0
        \end{split}
        \label{eq:linearizedCharSpeeds_2}\\
        \begin{split}
        v_+^{(3)} &= -\frac{2}{3}(K_2 - K_3) < 0, \qquad
        v_-^{(3)} = -\frac{2}{3}(K_3 - K_1) < 0
        \end{split}
        \label{eq:linearizedCharSpeeds_3}
        \end{align}
        \end{subequations}
        For \(\xi<0\), a positive relative speed \(a>0\) carries information towards the boundary, whereas a negative
        relative speed \(a<0\) carries information away.  When \(\xi>0\), the orientation is reversed.
        On $\xi<0$, the speeds in \eqref{eq:linearizedCharSpeeds_2} imply that the only data flowing into the boundary is $\bar k_3^{(2)}$. On $\xi>0$, \eqref{eq:linearizedCharSpeeds_1} and \eqref{eq:linearizedCharSpeeds_3} imply that both $\bar k_1^{(3)}$ and $\bar k_2^{(3)}$ flow into the boundary. The remaining three parameters are transported away from the boundary.  Succinctly stated, eq.~\eqref{eq:linearizedCharSpeeds} implies that three characteristic families enter the boundary and three exit.
        
        Using the continuity conditions at the boundary \eqref{eq:linearizedBC},
        along with the initial conditions
        determine the solution for all $\xi$.
        The solution is obtained by the method of characteristics with its uniqueness a standard result from ODE theory.  The calculation is standard.  For example, the solution for $\bar k_2^{(1,3)}$ is

        \begin{subequations}\label{eq:linearizedSolTok2}
            \begin{equation}
                 \bar{k}_2^{(3)}(\xi, t) =\bar{k}_{2,0}^{(3)}(\xi-v^{(3)}_{+} t) \quad \text{for } \xi>0, t>0,
            \end{equation}
            \begin{equation}
                    \bar{k}_2^{(1)}(\xi, t) =\begin{cases}
                    \bar{k}_{2,0}^{(1)}(\xi-v^{(1)}_{-} t)   & \text{for } \xi-v^{(1)}_{-} t > 0, \\ 
                    \bar{k}_{2,0}^{(3)}\left( 
                        \frac{v^{(3)}_{+}}
                        {v^{(1)}_{-}}
                        (\xi-v^{(1)}_{-} t) 
                    \right) & \text{for } \xi-v^{(1)}_{-} t \leq 0.
                    \end{cases} 
            \end{equation}
        \end{subequations}

        Continuous dependence on the initial data follows immediately from the explicit solution, thus concluding our proof of well-posedness.
        \end{proof}

  We note that the $C^1$ regularity assumption on the initial data could be relaxed.  In such a scenario, the solution, e.g., \eqref{eq:linearizedSolTok2}, is the weak solution.
    
    An interesting implication of Theorem \ref{th:linearizedWellPosed} is the dynamical, linearized stability of Y-solitons subject to long wavelength, transverse perturbations due to the boundedness of the linearized solution. The precise spectral stability of the Y-soliton solution of KPII has recently been studied in \cite{Mizumachi2026}.

\subsection{Local well-posedness for the nonlinear modulation system}
\label{sec:well_posedness_nonlinear}

A corollary of Theorem \ref{th:linearizedWellPosed} is the local well-posedness for the nonlinear modulation system in the vicinity of the boundary.

\begin{corollary}
\label{th:nonlinear_well_posed}
If the dependent variables are continuous and strictly ordered 
\begin{equation}\label{eq:localVertexOrdering}
    \kappa_{1}(t)<\kappa_{2}(t)<\kappa_{3}(t),
\end{equation}
for $t$ and $y$
in open intervals, the latter containing $y = p(t)$ satisfying \eqref{eq:ODEDefOfP} where
\[
    \kappa_{3}:=k_{3}^{(1)}(p(t),t)=k_{3}^{(2)}(p(t),t),
\]
\[
    \kappa_{2}:=k_{2}^{(1)}(p(t),t)=k_{2}^{(3)}(p(t),t),
\]
and
\[
    \kappa_{1}:=k_{1}^{(2)}(p(t),t)=k_{1}^{(3)}(p(t),t),
\]
then the nonlinear modulation system
\eqref{eq:mod_param_k3}--\eqref{eq:mod_param_k1} subject to
\eqref{eq:boundaryCont} and
\eqref{eq:ODEDefOfP}
remains well-posed within the same intervals. 
\end{corollary}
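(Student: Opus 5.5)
The plan is to repeat the characteristic-counting argument of Theorem~\ref{th:linearizedWellPosed}, now with the frozen constants $K_m$ replaced by the boundary traces $\kappa_m(t)$. The proof then follows from standard local theory for hyperbolic initial-boundary value problems with a free boundary.

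\emph{Step 1: pass to the co-moving frame.} Set $\xi=y-p(t)$, where $p$ solves \eqref{eq:ODEDefOfP}. Each of the six equations \eqref{eq:mod_param_k3}--\eqref{eq:mod_param_k1} becomes a scalar transport equation $\partial_t k+v\,\partial_\xi k=0$ on a fixed half-line $\xi>0$ or $\xi<0$, with relative speed $v=\lambda^{(j)}_\pm-\dot p$. Evaluating at $\xi=0$ and using the continuity \eqref{eq:boundaryCont}, all legs share the same values $\kappa_1,\kappa_2,\kappa_3$ there. The relative speeds at the vertex are therefore exactly the expressions \eqref{eq:linearizedCharSpeeds} with $K_m$ replaced by $\kappa_m(t)$. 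For example,
\[
v_-^{(2)}\big|_{\xi=0}=\tfrac43\kappa_1+\tfrac23\kappa_3-\tfrac23(\kappa_1+\kappa_2+\kappa_3)=-\tfrac23(\kappa_2-\kappa_1).
\]
Under the strict ordering \eqref{eq:localVertexOrdering}, each of these speeds is nonzero with the same sign as in the linear case.

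\emph{Step 2: extend the signs to a neighborhood.} Since the $k^{(j)}_m$ are continuous, each relative speed keeps its sign on a neighborhood of $\xi=0$ within the given intervals. The boundary is thus noncharacteristic there. Exactly three families enter the vertex, namely $k_3^{(2)}$ from below and $k_1^{(3)}, k_2^{(3)}$ from above. The remaining three, $k_3^{(1)}, k_2^{(1)}, k_1^{(2)}$, leave it. The three continuity conditions \eqref{eq:boundaryCont} then prescribe each outgoing quantity as the trace of a unique incoming one. This is the Kreiss--Lopatinskii (here trivially diagonal) condition: the number of boundary conditions equals the number of outgoing characteristics. The free-boundary ODE \eqref{eq:ODEDefOfP} has a right-hand side determined by the incoming traces. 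Since the equations are already in Riemann-invariant (diagonal) form, no coupling matrix needs to be checked beyond this count.

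\emph{Step 3: build and control the solution.} I would construct the solution by the method of characteristics as a fixed point. Given a candidate $p$ and candidate traces, integrate the incoming characteristics backward to the initial data. Outgoing characteristics emanating from the boundary carry the traces of the corresponding incoming characteristics. Then update $p$ through \eqref{eq:ODEDefOfP}. Because the equations are diagonal and each Riemann invariant is constant along its own characteristic, this map is a contraction in $C^0$ (or $C^1$ under compatibility) for short times. Alternatively, one can quote standard local existence results for quasilinear diagonal hyperbolic systems with noncharacteristic boundaries (e.g.\ Li Ta-Tsien's semi-global $C^1$ theory) together with an ODE for the boundary.

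\emph{Main obstacle.} The delicate point is justifying that the sign structure persists along the whole interval, and not just at $t=0$. Characteristic speeds depend nonlinearly on the solution, so they could approach zero. I would handle this exactly through the hypothesis: strict ordering \eqref{eq:localVertexOrdering} on the stated interval bounds $|v|$ away from zero on compact subintervals. Existence can then be continued as long as that ordering holds and gradients remain bounded. Loss of ordering, such as $\kappa_1\to\kappa_2$, is precisely the degeneration of a leg. That is why the conclusion is restricted to the same intervals.
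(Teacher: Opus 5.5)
Your proposal is correct and follows the paper's own argument. You pass to the co-moving frame $\xi=y-p(t)$, use the strict ordering $\kappa_1<\kappa_2<\kappa_3$ to recover the signs of \eqref{eq:linearizedCharSpeeds} with $K_m$ replaced by $\kappa_m$, and conclude from the three-in/three-out count, with each continuity condition pairing one incoming invariant with one outgoing one. Your sign persistence by continuity and your Step 3 construction (the characteristic fixed point, or the appeal to Li Ta-Tsien's theory) spell out what the paper's much terser proof leaves implicit.
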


\begin{proof}

The assumptions imply that $p(t)$ is continuously differentiable, which allows us to introduce the moving coordinate frame $\xi=y-p(t)$,
to fix the vertex position at $\xi = 0$. This allows us to follow the argument from the linear case, transforming to equations that take the form 
\[
        \partial_t u + v_\pm^{(j)}\,\partial_\xi u=0,
        \qquad v_\pm^{(j)}=\lambda_\pm^{(j)}(t)-\dot{p}(t),
\]
where the ordering \eqref{eq:localVertexOrdering}, allows for analogs of relations \eqref{eq:linearizedCharSpeeds} to hold. Then, exactly three characteristics enter and three exit the vertex implying the local well-posedness.

\end{proof}

There are a few important notes and caveats for Corollary \ref{th:nonlinear_well_posed} that we discuss below.
Although we proved local stability in a neighborhood of the vertex, the global behavior may be different. In fact, we will introduce a case where increasing Riemann data on one leg maintains the local stability around the vertex but leads to gradient catastrophe away from the vertex on another leg. 
This will be regularized by introducing a new vertex, corresponding to a shock in the modulation system and results in a global topological change in the soliton.  However, locally, around the vertex, the modulation system is still well-posed, and accordingly still resembles a Y-soliton.

On the other hand, we also present initial conditions that violate a generalization of the ordering condition \eqref{eq:localVertexOrdering} on some leg, sufficiently far away from the vertex.  We will show that this data evolves in such a way that the vertex ordering condition \eqref{eq:localVertexOrdering} is maintained for all time. This allows for the application of Corollary \ref{th:nonlinear_well_posed}, preserving the well-posedness at the vertex and the local structure of the Y-soliton for finite times.

Lastly, we will also discuss one case that violates the ordering condition \eqref{eq:localVertexOrdering} at the vertex and leads to a genuine topological change of the Y-soliton around the vertex.

\section{Classification of RW solutions to the Y-soliton modulation system}
\label{sec:ySoliModulations}
    We will classify the rarefaction wave solutions for the $N=2$ Y-soliton.
    That is, for a non-modulated Y-soliton with parameters $(K_1, K_2, K_3)$, we choose a leg index $j \in \{1,2,3\}$ and a parameter index $m\in \{1, 2, 3\}$, $m \ne j$ and prescribe a step in $k_m$ from $K_m$ to $K'_m$ located some distance from the Y-soliton vertex.  The initial data must be admissible, resulting in the generation of a rarefaction wave, which will necessarily order $K_m$ and $K_m'$.  We will classify all possible such modulation solutions resulting from these initial conditions and comment on their significance. We avoid initial conditions that result in the immediate growth of a new leg as studied in \cite{ryskamp_modulation_2022, han_regularizations_2026}.  In what follows, emphasis is given to those modulations that interact with the vertex.

    \subsection{Leg $j = 1$}

    On leg $j = 1$, admissible rarefaction data consists of either $K_2 < K_2' < K_3$ ($m = 2$) or $K_3 < K_3'$ ($m = 3$).
    For both cases, the rarefaction wave propagates away from the vertex. This can be seen from the characteristic velocity ordering described in the proof of Theorem \ref{th:linearizedWellPosed}.  In particular, $\lambda^{(1)}_\pm(K_2,K_3) = \tfrac{4}{3}K_{3,2} + \tfrac{2}{3}K_{2,3} > \dot{p} = \tfrac23(K_1+K_2+K_3)$.  Thus, both rarefaction wave modulations that originate on leg $j = 1$, remain on leg $j = 1$ for all $t > 0$ so that they represent single-leg modulations.

    \subsection{Leg $j = 2$}

    For leg $j = 2$, the slow-RW with $m = 1$ and $K_1' < K_1$ propagates away from the vertex because $\lambda_-^{(2)}(K_1,K_3) = \tfrac{4}{3}K_1 + \tfrac{2}{3}K_3 < \dot{p} = \tfrac23(K_1+K_2+K_3)$, so it is just a single leg modulation remaining on leg $j = 2$.
    
The fast-RW 
\begin{equation}
   m = 3:\quad K_1 < K_3' < K_3,
\end{equation}
initiated on leg $j =2$ is our first example of a modulation that will interact with the vertex. For this case, the initial condition to the system \eqref{eq:mod_param_k3}--\eqref{eq:mod_param_k1} is
    \begin{equation}
        \begin{aligned}
            &k^{(2)}_3(y,0)=  \begin{cases}
            K_3'& y<y_0, \\
            K_3& y_0 < y< 0, \\ 
            \end{cases}\\
            &k^{(1)}_3(y,0) = K_3, \quad y > 0,\\
            &k^{(j)}_2(y,0) = K_2, \quad y > 0, \quad j\in \{1,3 \},\\
            &k^{(j)}_1(y,0) = K_1, \quad (-1)^j y < 0, \quad j\in \{2,3 \}.
        \end{aligned}
        \label{eq:fastRWj2IC}
    \end{equation}
    The only variable that is modulated is $k_3^{(1,2)}$, so the primary equation of interest is \eqref{eq:mod_param_k3}.  See Fig.~\ref{fig:fastRWj2IC} for a labeled initial configuration.

    \begin{figure}
        \centering
        \includegraphics{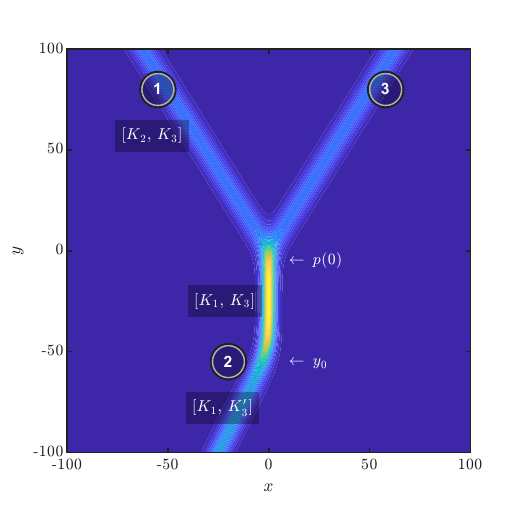}
        \caption{Sketch of the initial conditions \eqref{eq:fastRWj2IC} for the fast-RW initiated on leg $j = 2$.  The ensuing modulation is confined to legs $j = 2$ and $j = 1$.}
        \label{fig:fastRWj2IC}
    \end{figure}

    \subsubsection{Case $K_1 < K_2 < K_3' < K_3$}
    
    In this subsection, we will restrict to the ordering 
    \begin{equation}\label{eq:fastRWj2Ordering}
        K_1<K_2<K_3'<K_3 .
    \end{equation}
    The case $K_1 < K_3' \le K_2$ will be considered next.
    Solving for the initial conditions \eqref{eq:fastRWj2IC}, on the $j=2$ leg prior to the fast-RW reaching the vertex, the solution for $y < p(t)$, $t > 0$ is (cf.~eq.~\eqref{eq:2-RW})
    \begin{equation} \label{eq:2RWSolnj2Onj2}
        k^{(2)}_3(y, t) = \begin{cases}
        K_3' & \text{if } y-y_0 < \lambda^{(2)}_+(K_1, K_3') t, \\
        \frac{3(y-y_0)}{4t} - \frac{1}{2}K_1 & \text{if } \lambda^{(2)}_+(K_1, K_3') t < y-y_0 < \lambda^{(2)}_+(K_1, K_3) t, \\
        K_3 & \text{if } y-y_0 > \lambda^{(2)}_+(K_1, K_3) t .
        \end{cases}
    \end{equation}
    This solution provides sufficient information to solve the vertex position ODE \eqref{eq:ODEDefOfP} at all times since all other phase parameters are constant. Thus, the vertex position is
    \begin{equation}\label{eq:2RWj2Vertex}
        p(t) = \begin{cases}
        \frac{2}{3} \left(K_1 + K_2 + K_3 \right)t, & t < t_1, \\
        \left(1-2\sqrt{\frac{2(K_3 -K_2)}{3|y_0|}t}\right)y_0 + \left(\frac{2}{3}K_1  + \frac{4}{3}K_2\right) t,  & t_1 <t < t_2,\\
         \frac{2}{3}  (K_1+K_2 +K_3' )t + y_2,& t > t_2,
        \end{cases}
    \end{equation}
    where we introduce the RW entry time to the vertex  $t = t_1$
    \begin{equation} \label{FastRWj2t1}
        t_1=\frac{3|y_0|}{2(K_3 -K_2)},
    \end{equation}
    and exit time from the vertex  $t = t_2$
    \begin{equation}\label{FastRWj2t2}
        t_2= \left ( \frac{K_3-K_2}{K_3'-K_2} \right )^2 t_1 , 
    \end{equation}
    which satisfy $0<t_1<t_2$ by \eqref{eq:fastRWj2Ordering} and $y_0<0$.  The accumulated position phase of the vertex after RW interaction is
    \begin{equation}
       y_2 = \frac{K_3 - K_3'}{K_3' - K_2} |y_0|.
    \end{equation}
    Knowing the position of the boundary and invoking continuity at the boundary  \eqref{eq:boundaryCont}, we can solve for $k^{(1)}_3(y,t)$ with the method of characteristics for $y > p(t)$, $t > t_1$
    \begin{subequations}
        \begin{equation}
            k^{(1)}_3(y,t)= \begin{cases}
            K_3 & y \geq (\frac{4}{3}K_3 + \frac{2}{3}K_2 )t, \\
            f(\hat{t}^{(3)}(y,t)) &p(t) < y < (\frac{4}{3}K_3 + \frac{2}{3}K_2 )t,
            \end{cases}
        \end{equation}
        where $f(\hat{t}^{(3)})$ results from tracing back along a fast characteristic to the boundary 
        \[
        f(\hat{t}^{(3)})  = \begin{cases}
        \frac{3(p(\hat{t}^{(3)})-y_0)}{4\hat{t}^{(3)}} - \frac{1}{2}K_1 & \text{if } t_1 <\hat{t}^{(3)}< t_2\\
        K_3' & \text{if } \hat{t}^{(3)} > t_2
        \end{cases} 
        \]
        and $\hat{t}^{(3)}=\hat{t}^{(3)}(y,t)$ is the foot of the characteristic on leg $j=3$, defined implicitly as
        \[
        y  = (\frac{4}{3}f(\hat{t}^{(3)}) + \frac{2}{3}K_2)(t-\hat{t}^{(3)}) +p(\hat{t}^{(3)}) .
        \]
    \end{subequations}
    The solution for $0 < t < t_1$, $y > p(t)$ on leg $j = 1$ is the constant $k_3^{(1)}(y,t) = K_1$.  For a fixed $t$, the right-hand side of the implicit relation is a continuously differentiable and strictly decreasing function of $\hat{t}^{(3)}$ over the relevant interval, spanning the entire characteristic region. Thus, by the implicit function theorem, there exists a unique $\hat{t}^{(3)}=\hat{t}^{(3)}(y,t)$ that fixes the characteristic.

    This solution describes the continuous deformation of a local Y-soliton with phase parameters $(K_1, K_2, K_3)$ with a disturbance on the stem leg $j=2$ to a local Y-soliton with parameters $(K_1, K_2, K_3')$. An example numerical evolution of the KP equation with initial conditions corresponding to this case is shown in Fig.~\ref{fig:fastRWfromj2}.
    \begin{figure}
        \centering
        \includegraphics[width=0.98\linewidth]{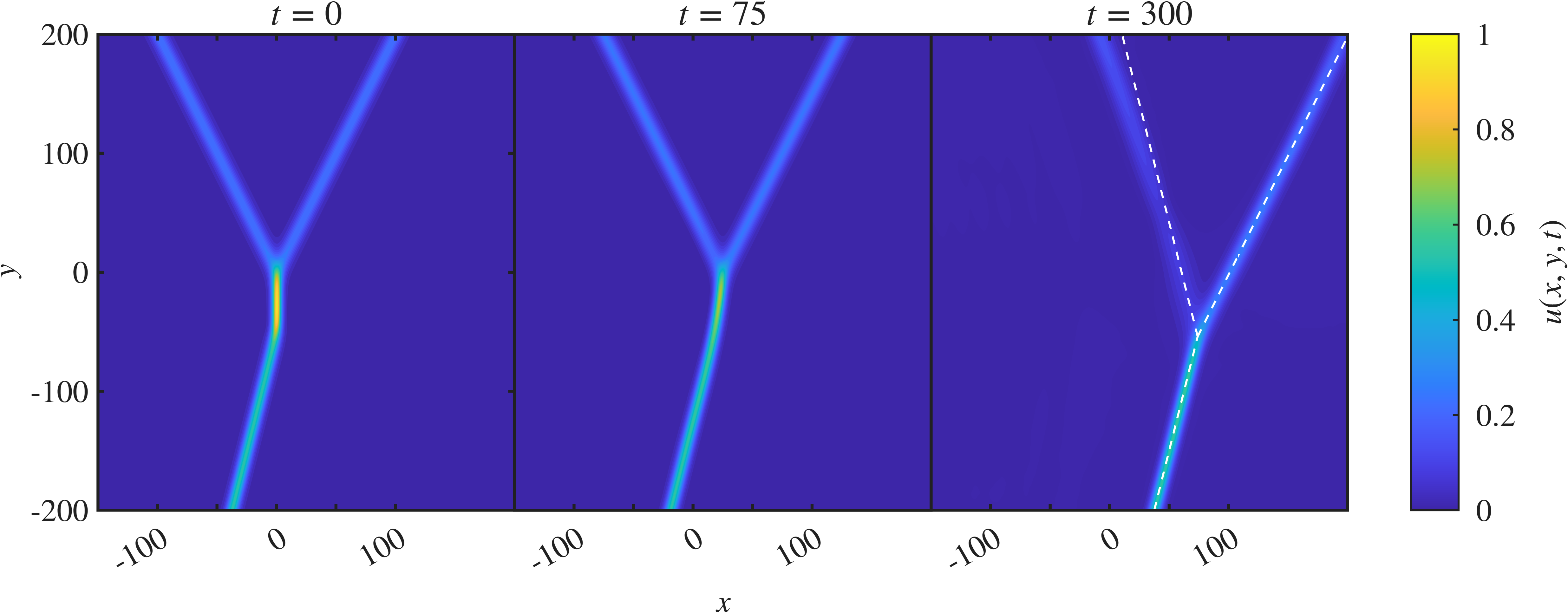}
        \caption{Numerical simulation of KP corresponding to a fast-RW from leg $j = 2$ to leg $j =1$. The relevant parameters are $K_1=-1, K_2=0, K_3'=0.5, K_3=1$, and $y_0=-50$. The dashed line shows an unmodulated Y-soliton with parameters of $(k_1, k_2, k_3)=(-1, 0, 0.5)$ at $t=300$, demonstrating the long time asymptotic state in the vicinity of the vertex.  See section \ref{sec:direct-numer-simul} for numerical details.} 
        \label{fig:fastRWfromj2}
    \end{figure}

    \subsubsection{Case $K_1 < K_3' \le K_2 < K_3$}
    \label{sec:leg2AsymptoticSingular}

    We now consider the special case of a RW from leg $j = 2$ to $j = 1$ with the parameter ordering
    \begin{equation}\label{eq:fastRWj2OrderingSpecial}
        K_1<K_2=K_3'<K_3,
    \end{equation}
    for the initial condition \eqref{eq:fastRWj2IC} associated with system \eqref{eq:mod_param_k3}--\eqref{eq:mod_param_k1}. Note that \eqref{eq:fastRWj2OrderingSpecial} does not immediately violate the local ordering on leg \(j=2\) since, on that leg, the parameters are well-ordered
    \[
    k^{(2)}_1=K_1 < K_3' \le k^{(2)}_3.
    \]
    
    Although \(t_1\) in eq.~\eqref{FastRWj2t1} remains finite and positive, equation \eqref{FastRWj2t2} shows that \(t_2\to\infty\) as \(K_3'\to K_2\) from the left. Thus, the rarefaction wave originating on leg \(j=2\) reaches the vertex, but never fully passes through it.
    
    This can be understood by comparing characteristic speeds. The trailing edge of the fast rarefaction wave on leg \(j=2\) travels with speed
    \begin{equation}\label{eq:trailingSpeed}
        \lambda^{(2)}_+(K_1,K_3')
        =\frac{4}{3}K_3'+\frac{2}{3}K_1.
    \end{equation}
    On the other hand, for \(t_1<t<t_2\), substitution of \eqref{eq:2RWj2Vertex} into \eqref{eq:2RWSolnj2Onj2} yields the lower bound on $k_3^{(2)}$
    \begin{equation}\label{eq:k3BoundAtVertex}
        k^{(2)}_3(p(t),t)
        =K_2+\frac{3}{2}\sqrt{\frac{2y_0(K_3-K_2)}{3t}}
        \ge K_2,
        \qquad t\in[t_1,t_2].
    \end{equation}
    Using \eqref{eq:ODEDefOfP} together with the continuity condition \eqref{eq:boundaryCont}, we  obtain the lower bound
    \begin{equation}\label{eq:vertexBoundSingular}
        \dot{p}(t)
        = \frac{2}{3}\bigl(K_1+K_2+k^{(2)}_3(p(t),t)\bigr)
        \ge \frac{2}{3}\bigl(K_1+2K_2\bigr).
    \end{equation}
    When \(K_3'=K_2\), comparison of \eqref{eq:trailingSpeed} and \eqref{eq:vertexBoundSingular} gives
    \begin{equation}\label{eq:trailingSpeedSlowerThanVertex}
        \lambda^{(2)}_+(K_1,K_3')
        = \frac{4}{3}K_3'+\frac{2}{3}K_1
        \le \frac{2}{3}\bigl(K_1+2K_2\bigr)
        \le \dot{p}(t).
    \end{equation}
    Since the RW trailing edge moves no faster than the vertex, and since initially \(y_0<p(0) = 0\), the RW can never fully pass through the vertex. In particular, the boundary data for $k_3^{(1)}(y,t)$ on leg \(j=1\) never drops below \(K_2\) but, using \(t_2\to\infty\), equation \eqref{eq:k3BoundAtVertex} shows that
    \[
    k^{(1)}_3(p(t),t)\to K_2
    \qquad\text{as } t\to\infty.
    \]
    This has some interesting implications.  On leg \(j=1\) the phase-parameter pair at the vertex asymptotically approaches
    \[
    (k^{(1)}_2,k^{(1)}_3)=(K_2,K_3)\to (K_2,K_2),
    \]
    which corresponds to vanishing amplitude. Thus leg \(j=1\) is \textit{asymptotically separates} from the other two legs $j = 2$ and $j =3$, which themselves form a modulated line soliton. In this sense, the long time limit is essentially an effective reduction from a three-parameter modulated Y-soliton to a two-parameter, $(K_1,K_2)$, modulated line soliton plus an asymptotically separated partial soliton representing the remnant of leg $j = 1$ \cite{ryskamp_evolution_2021}.  Note that since this reduction only happens as $t\to\infty$, the local ordering at the vertex is never violated which means that it falls under Theorem \ref{th:nonlinear_well_posed}, and accordingly, at any finite time, near the vertex, the structure is still locally a Y-soliton.  The main point is that the solution asymptotically approaches separated solutions as $t \to \infty$.  Also, as $t \to \infty$, the vertex velocity tends to one of the characteristic velocities $\dot{p}(t) \to \lambda_+^{(2)}(K_1,K_2)$ of the asymptotically separated, modulated line soliton. A numerical simulation illustrating this behavior is shown in Figure \ref{fig:singularFastRWj2}. 
    
    \begin{figure}[h!]
        \centering
        \includegraphics[width=0.98\linewidth]{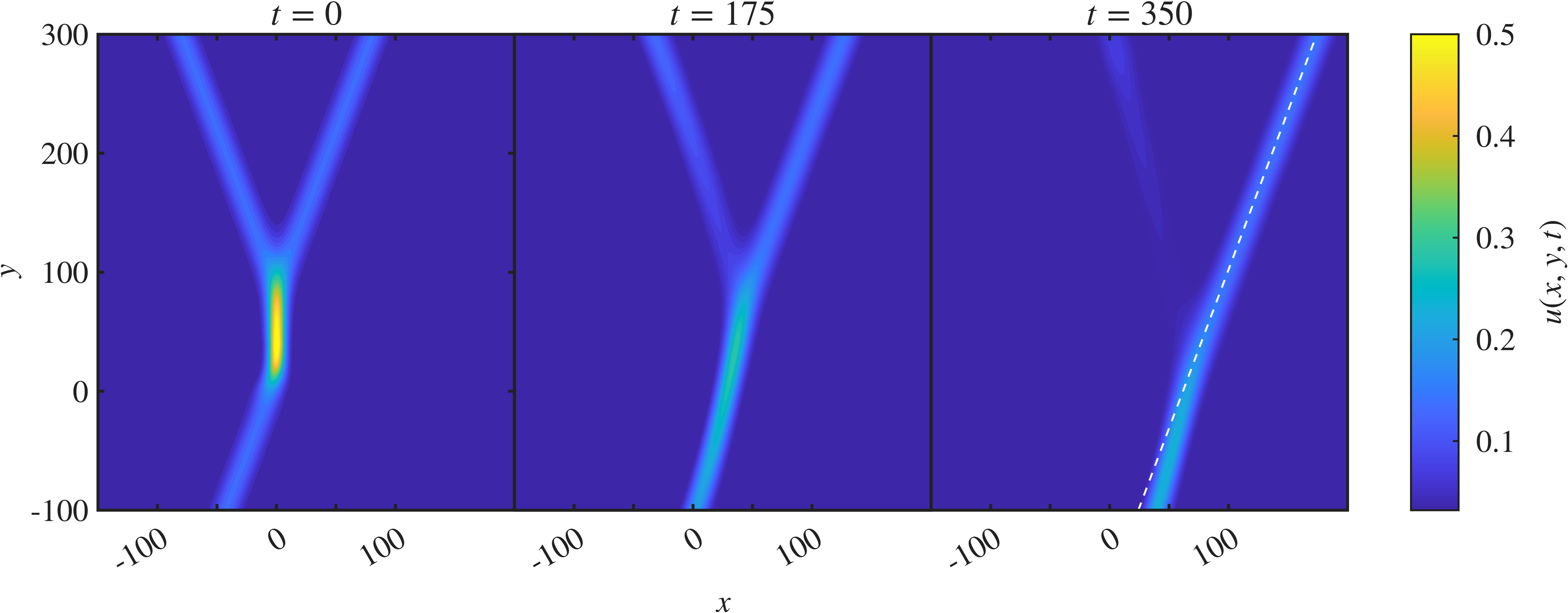}
        \caption{Long-time behavior in the special case \(K_3'=K_2\), with parameters $K_1=-0.75, K_2=K_3'=0, K_3=0.75$, and $y_0=-50$. Numerically, the amplitude of leg \(j=1\) near the vertex approaches zero and becomes comparable to the background noise level. The dashed line shows an unmodulated line soliton with parameters of $(k_1, k_2)=(-0.75, 0)$ at $t=350$ to which the solution locally asymptotes.}
        \label{fig:singularFastRWj2}
    \end{figure}
    
    Finally, note that the bounds \eqref{eq:k3BoundAtVertex} and \eqref{eq:trailingSpeedSlowerThanVertex} continue to hold when \(K_3'<K_2\), provided \(K_1<K_3'\) so that local ordering on leg \(j=2\) is preserved. The vertex effectively shields leg \(j=1\) from the potentially order-violating portion of the rarefaction generated on leg \(j=2\).

    \subsection{Leg $j = 3$}\label{sec:j3RW}

    For initial modulations of leg $j = 3$, both RWs will interact with the vertex and we analyze them both below.

    \paragraph{slow-RW}
    The calculation of this case is similar to that of the fast-RW initiated on leg $j = 2$ because the wave is transferred from one leg to another. We solve System \eqref{eq:mod_param_k3}--\eqref{eq:mod_param_k1} with the initial conditions
    \begin{equation}
        \begin{aligned}
                &k^{(j)}_3(y,0) = K_3, \quad j\in \{1,2 \}, \quad (-1)^j y < 0, \\
                &k^{(j)}_2(y,0) = K_2, \quad j\in \{1,3\}, \quad y > 0, \\
                &k^{(2)}_1(y,0) = K_1, \quad y < 0\\
                &k^{(3)}_1(y,0)=  \begin{cases}
                K_1& 0<y<y_0, \\
                K_1'& y>y_0 ,
                \end{cases}
        \end{aligned}
        \label{eq:slowRWj3IC}
    \end{equation}
    \begin{figure}[h!]
        \centering
        \includegraphics{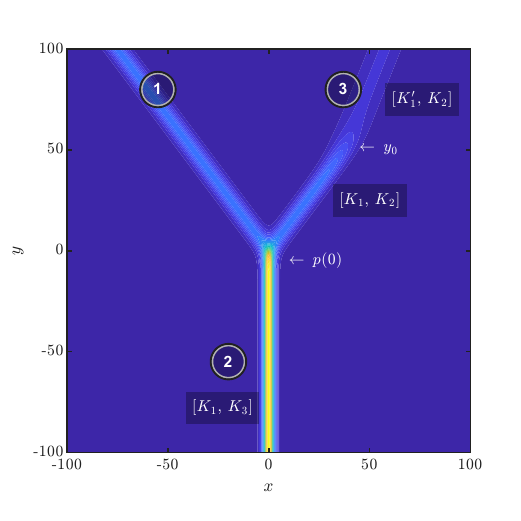}
        \caption{Sketch of the initial conditions \eqref{eq:slowRWj3IC} for the slow-RW initiated on leg $j = 3$.  The ensuing modulation is confined to legs $j = 3$ and $j = 2$.}
        \label{fig:slowRWj3IC}
    \end{figure}
    where we enforce $K_1<K_1'\le K_2<K_3$ to ensure an admissible slow-RW. See Fig.~\ref{fig:slowRWj3IC} for a schematic of the initial data \eqref{eq:slowRWj3IC}.
    Since $k_1^{(2,3)}$ are the variables evolving over time, the primary equations of interest are \eqref{eq:mod_param_k1_leg3} and \eqref{eq:mod_param_k1_leg2} because the remaining equations are trivially satisfied (the solution is constant) for all $y$ in the domain and $t > 0$.
    The solution for $y>p(t)$ on leg $j = 3$ is given as
    \begin{equation}
        k^{(3)}_1(y, t) = \begin{cases}
        K_1 & \text{if } y-y_0 < \lambda^{(3)}_-(K_1,K_2) t, \\
        \frac{3(y-y_0)}{4t} - \frac{1}{2}K_2 & \text{if } \lambda^{(3)}_-(K_1,K_2) t< y-y_0 < \lambda^{(3)}_-(K_1',K_2) t, \\
       K_1' & \text{if } y-y_0 > \lambda^{(3)}_-(K_1',K_2) t ,
        \end{cases}
    \end{equation}
    where
    \begin{equation}
        p(t) = \begin{cases}
        \frac{2}{3} \left(K_1 +K_2 + K_3 \right)t, & t < t_1, \\
         \left(1-2\sqrt{\frac{2(K_3 -K_1)}{3y_0}t}\right)y_0 + \left(\frac{2}{3}K_2  + \frac{4}{3}K_3\right) t,  & t_1 <t< t_2, \\
         \frac{2}{3}  (K_1' +K_2 +K_3 )t + y_2,& t > t_2,
        \end{cases}
    \end{equation}
    with $t_1= \frac{3y_0}{2(K_3 -K_1)}$, $t_2= t_1 \left ( \frac{K_3-K_1}{K_3  -K_1'}\right)^2$, and $y_2=-y_0\frac{K_1' - K_1}{K_3 - K_1'} < 0$.
    This determines the solution on leg $j=2$ by tracing back a slow characteristic to the boundary
    \eqref{eq:boundaryCont}
    \begin{subequations}
        \begin{equation}
            k^{(2)}_1(y,t)= \begin{cases}
            K_1 & y \leq (\frac{4}{3}K_1 + \frac{2}{3}K_3 )t, \\
            f(\hat{t}^{(2)}) & p(t)>y > (\frac{4}{3}K_1 + \frac{2}{3}K_3 )t, 
            \end{cases} 
        \end{equation}
        \text{where $f(\hat{t}^{(2)})$ results from the boundary data}
        \[
         f(\hat{t}^{(2)})  = \begin{cases}
            \frac{3(p(\hat{t}^{(2)})-y_0)}{4\hat{t}^{(2)}} - \frac{1}{2}K_2 & \text{if } t_1 <\hat{t}^{(2)}< t_2 \\
            K_1' & \text{if } \hat{t}^{(2)} > t_2
            \end{cases} 
        \]
        \end{subequations}
        and the variable $\hat{t}^{(2)}=\hat{t}^{(2)}(y,t)$ is the foot of the characteristic on leg $j=2$ and is uniquely defined by the implicit function theorem as
        \[
        y  = (\frac{4}{3}f(\hat{t}^{(2)}) + \frac{2}{3}K_3)(t-\hat{t}^{(2)}) +p(\hat{t}^{(2)}) .
        \]
    \begin{figure}
        \centering
        \includegraphics[width=0.98\linewidth]{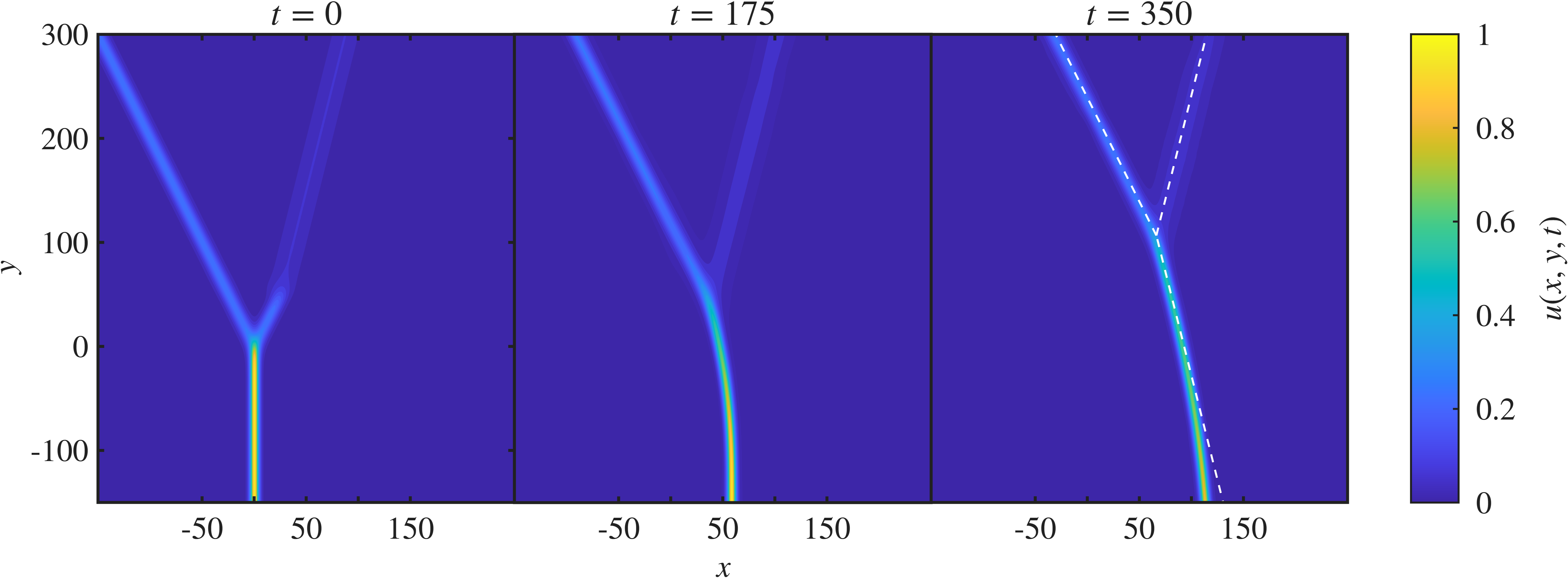}
        \caption{Numerical simulation of a slow-RW initiated on leg $j = 3$ that fully transmits to leg $j = 2$. The relevant parameters are $K_1=-1,K_1' =-0.5, K_2=0, K_3=1$, and $y_0=50$. The dashed line shows an unmodulated Y-soliton with parameters of $(k_1, k_2, k_3) = (-0.5, 0, 1)$ at $t=350$ that the solution approaches near the vertex.}
        \label{fig:slowRWfromj3}
    \end{figure}
    The solution on leg $j = 2$ for $0<t<t_1$ and $y < p(t)$ is the constant $k_1^{(2)}(y,t) = K_1$.

     This solution describes a RW that propagates from leg $j = 3$ to leg $j = 2$, which converts the Y-soliton with phase parameters of $(k_1,k_2,k_3)=(K_1, K_2, K_3)$ to a Y-soliton with parameters $(k_1,k_2,k_3) = (K_1', K_2, K_3)$.  See Fig.~\ref{fig:slowRWfromj3} for a representative numerical simulation of the KP equation.

    A notable special case of this system is when we choose $K_1'=K_2$.  The RW is still admissible and passes through the vertex. 
    This corresponds to a partial soliton for leg $j=3$.  In particular, for $y>y_0$, the amplitude of the leg is zero. The previous analysis for the mathematical solution is still valid for this case. However, note that in long time, instead of replacing one parameter with another parameter, that is transforming $(K_1,K_2,K_3) \to (K_1', K_2, K_3)$, we have instead `deleted' a parameter, going from a 3-parameter system to a 2-parameter system $(K_1,K_2,K_3)\to(K_2, K_3)$. Again, this suggests a change in the geometry of the solution. Instead of the long time solution being in the form of a Y-soliton with altered parameters, the solution tends toward a line soliton since legs $j = 1$, $j = 2$ are both parameterized by $(K_2, K_3)$ in long time whereas leg $j = 3$ is effectively deleted.  The leg deletion occurs in finite time when the upper extremity of the slow-RW reaches the vertex.
    Note that this exact leg deletion case violates strict ordering $k_1 < k_2 < k_3$ at the vertex and thus correspondingly does not satisfy Theorem \ref{th:nonlinear_well_posed}. In fact, the vertex velocity $\dot{p}(t) \to \lambda_-^{(1)}$ degenerates into the slow characteristic velocity $\lambda_-^{(1)}$ at the lower edge of a modulated line soliton with parameters $(K_2, K_3)$.
    \begin{figure}
        \centering
        \includegraphics[width=0.98\linewidth]{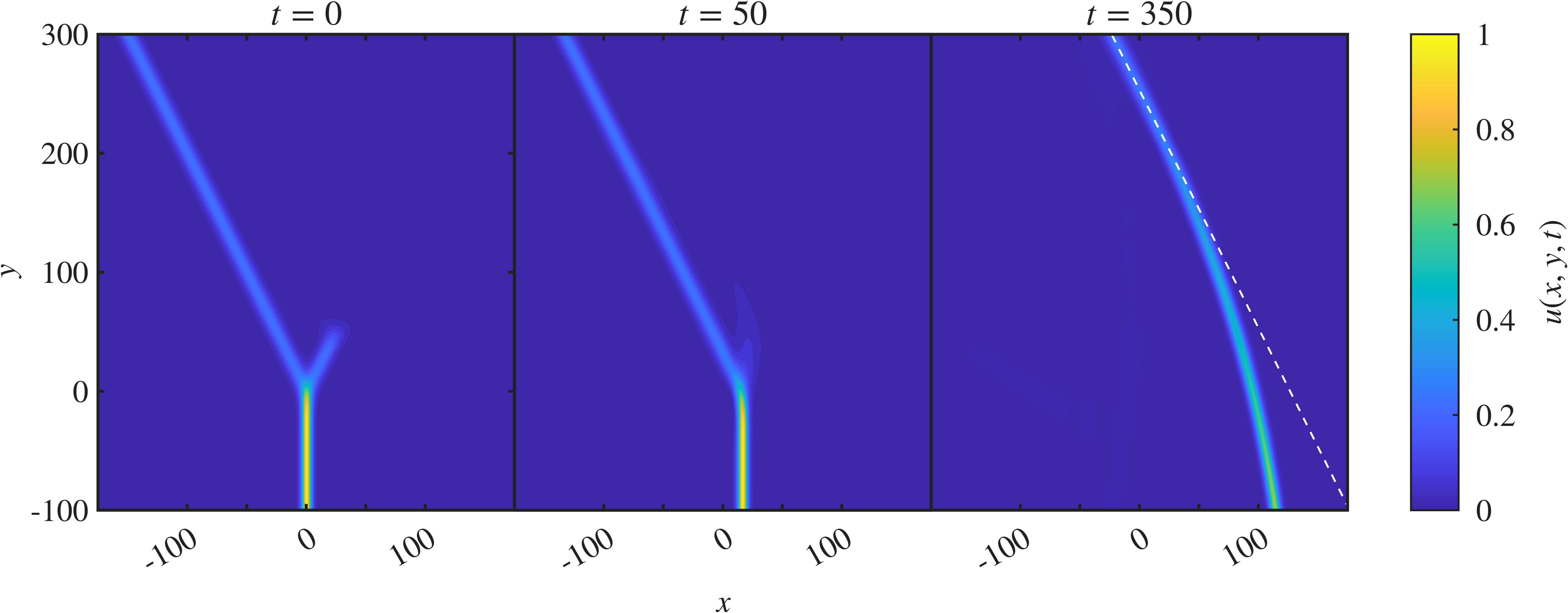}
        \caption{This figure displays the time evolution from a numerical simulation of a singular slow rarefaction starting from leg $j=3$ and transmitting to leg $j=2$. The relevant parameters are $K_1=-1,K_1' =K_2=0, K_3=1$, and $y_0=50$. The dashed line shows an unmodulated line soliton with parameters of $(k_1, k_2) = (0, 1)$ at $t=350$ for comparison.
        }
        \label{fig:slowRWfromj3Del}
    \end{figure}

    Figure \ref{fig:slowRWfromj3Del} displays a numerical simulation of the KP equation for the slow-RW from leg $j = 3$ to $j=2$ for the special parameter value $K_1' = K_2$ that eventually leads to the deletion of leg $j = 3$.
    
    \paragraph{fast-RW}

    The case of the fast-RW from leg $j=3$ to $j=1$ is the most complex case we consider because gradient catastrophe in the modulation equations on leg $j=1$ after the RW has encountered the vertex occurs.  A theorem that applies to this case is proven in Appendix \ref{sec:appendix_grad_catastrophe}.  A necessary condition in the corollary to the theorem demonstrates that all other RW cases for Y-soliton modulation do not lead to gradient catastrophe.  Care must be taken to suitably regularize the singularity, which results in the addition of a new, growing soliton leg analogous to \cite{ryskamp_modulation_2022,han_regularizations_2026}.
        
    For the fast-RW from leg $j=3$ to leg $j=1$, consider \eqref{eq:mod_param_k3}--\eqref{eq:mod_param_k1} with initial conditions
    \begin{equation}
        \begin{aligned}
                &k^{(j)}_3(y,0) = K_3, \quad j\in \{1,2 \}, \quad (-1)^{j}y < 0, \\
                &k^{(1)}_2(y,0) = K_2, \\
                &k^{(3)}_2(y,0)=  \begin{cases}
                K_2& 0<y<y_0, \\
                K_2'& y>y_0
                \end{cases},\\
                &k^{(j)}_1(y,0) = K_1, \quad j\in \{2,3 \}, \quad (-1)^j y < 0,
        \end{aligned}
        \label{eq:fastRWj3IC}
    \end{equation}
    where the ordering 
    \begin{equation}\label{eq:fastRWj3Ordering}
        K_1<K_2<K_2'<K_3,
    \end{equation}
    ensures an admissible fast-RW initiated on leg $j=3$. See Fig.~\ref{fig:fastRWj3IC} for a schematic of the initial data \eqref{eq:fastRWj3IC}.  Since $k_2$ is the variable evolving over time, the primary equation of interest is \eqref{eq:mod_param_k2}.
    
    \begin{figure}
        \centering
        \includegraphics{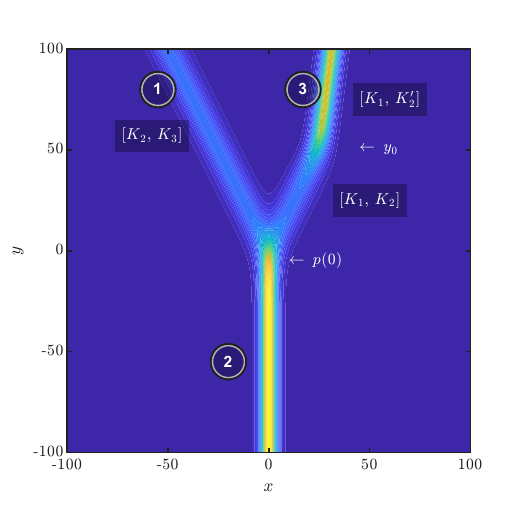}
        \caption{Sketch of initial conditions \eqref{eq:fastRWj3IC} for the fast-RW initiated on leg $j=3$. The ensuing modulation is confined to legs $j=3$ and $j=1$. }
        \label{fig:fastRWj3IC}
    \end{figure}

    We begin as we did before. The solution for leg $j=3$ is
    \begin{equation}
        k^{(3)}_2(y, t) = \begin{cases}
        K_2 & \text{if } y-y_0 < \lambda^{(3)}_+(K_1,K_2) t, \\
        \frac{3(y-y_0)}{4t} - \frac{1}{2}K_1 & \text{if } \lambda^{(3)}_+(K_1,K_2) t < y-y_0 < \lambda^{(3)}_+(K_1,K_2') t , \\
       K_2' & \text{if } y-y_0 > \lambda^{(3)}_+(K_1,K_2') t,
        \end{cases}
    \end{equation}
    for $y>p(t)$ where
    \begin{equation}
        p(t) = \begin{cases}
        \frac{2}{3} \left(K_1 + K_2 + K_3 \right)t, & 0 < t < t_1, \\
         \left(1-2\sqrt{\frac{2(K_3 -K_2)}{3y_0}t}\right)y_0 + \left(\frac{2}{3}K_1  + \frac{4}{3}K_3\right) t,  & t_1 <t< t_2, \\
         \frac{2}{3}  (K_1+K_2' +K_3 )t + y_2,& t > t_2.
        \end{cases}
    \end{equation}
    
    with 
    \begin{equation}\label{eq:t1t2New}
        t_1=\frac{3y_0}{2(K_3 -K_2)},  \quad t_2=t_1 \left ( \frac{K_3-K_2}{K_3  -K_2'}\right)^2,
    \end{equation} and $y_2=-y_0\frac{K_2' - K_2}{K_3-K_2'} < 0$.  

    Using backward characteristics, we can express the solution on leg $j=1$ in terms of the data at the boundary
    \begin{equation}\label{eq:boundaryDataDefinition}
        k_2^{(1)}(y,t) = f(\hat{t}^{(1)})  = \begin{cases}
        K_2 & \text{if } \hat{t}^{(1)} < t_1, \\
        \frac{3(p(\hat{t}^{(1)})-y_0)}{4\hat{t}^{(1)}} - \frac{1}{2}K_1 & \text{if } t_1 <\hat{t}^{(1)}< t_2, \\
        K_2' & \text{if } \hat{t}^{(1)} > t_2,
        \end{cases} 
    \end{equation}
    where $\hat{t}^{(1)}=\hat{t}^{(1)}(y,t)$ is the propagation along the slow characteristic velocity of leg $j = 1$
    \begin{equation}\label{eq:chars}
        y  = (\frac{4}{3}f(\hat{t}^{(1)}) + \frac{2}{3}K_3)(t-\hat{t}^{(1)}) +p(\hat{t}^{(1)}).
    \end{equation}
    Figure \ref{fig:charSketch}(a) displays characteristics in the $y$-$t$ plane for this solution as gray (where $k_2^{(1)}(y,t) = K_2$, $y > p(t)$, $t < t_1$) and orange (the narrow region where $K_2' > k_2^{(1)} > K_2$, $y > p(t)$, $t_1 < t < t_{\rm break}$) curves.
    
    While the RW is passing through the boundary ($t_1<t<t_2$), $f'(t) > 0$, a necessary condition for wavebreaking in Corollary \ref{cor:noShockCriterion}. Now we proceed to show sufficient conditions for gradient catastrophe.

    The characteristic velocity in equation \eqref{eq:mod_param_k2_leg1} is $c(k_2^{(1)}) = \lambda^{(1)}_-(k^{(1)}_2, K_3)=\frac{4}{3}k^{(1)}_2 + \frac{2}{3}K_3$. In order to apply Theorem \ref{th:shockCondition}, the boundary velocity $\dot{p}(t)$ given in \eqref{eq:ODEDefOfP}, combined with the bounds $K_1<K_2\leq k_2(y,t)\leq K_2' < K_3$ yield the following bounds
    \begin{equation}
        \frac{2}{3}(K_1+K_2+K_3)\le\dot{p}(t)\le \frac23(K_1+K_2'+K_3)=V_+.
    \end{equation}
    Since \(K_1<K_2'\) implies
    $
    \frac{K_1+K_2'}{2}<K_2'
    $,
    and since \(f\) increases continuously up to the value \(K_2'\), there exists \(\hat{t}^{(1)}\in(t_1,t_2)\) such that
    \[
    \frac{K_1+K_2'}{2}<f(\hat{t}^{(1)}).
    \]
    This implies that 
    \[
     \frac23(K_1+K_2'+K_3)<\frac43 f(\hat{t}^{(1)})+\frac23 K_3 = c(f(\hat{t}^{(1)})),
    \]
    and that we have found the existence of a $\hat{t}^{(1)}$ such that $c(f(\hat{t}^{(1)})) > V_+$. Furthermore, $f(t)$ is increasing, so, in particular, $f'(\hat{t}^{(1)}) > 0$, the sufficient conditions from \eqref{eq:aboveVertexShockCondition} in Theorem~\ref{th:shockCondition} are satisfied.  Thus we conclude that gradient catastrophe occurs in the region $y>p(t)$. 

    Note that function $\hat{t}^{(1)} = \hat{t}^{(1)}(y,t)$ becomes multivalued after wave breaking. For notational purposes, the branch of $\hat{t}^{(1)}$ we refer to when calculating the breaking time is the branch that originates from the moving boundary $p(t)$. See \ref{fig:charSketch} for a visual of this and Appendix \ref{sec:appendix_grad_catastrophe} for more details.

        \begin{figure}
        \includegraphics{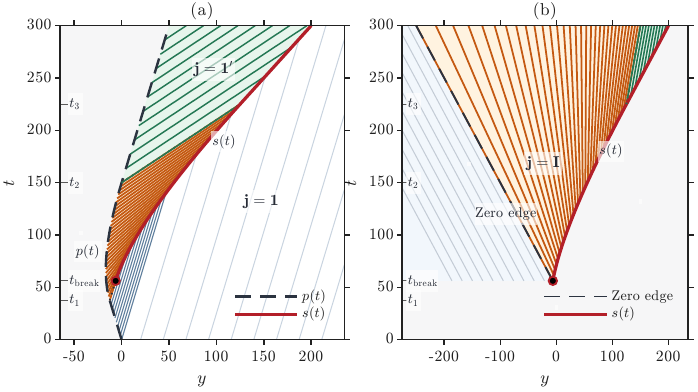}
        \caption{The characteristic plane for a fast-RW case initiated from leg $j=3$ to $j=1$. Panel (a) displays leg $j=1$ and $j=1'$ separated by $s(t)$ in the characteristic plane. Panel (b) displays leg $j=I$ forming from $s(t)$. The gray characteristics in (a) propagate the data $k_2^{(1)} = K_2$; the orange characteristics in (a,b) propagate changes in $K_2 < k_2^{(1',I)} < K_2'$, and the green characteristics in (a,b) represent the stabilized data $k_2^{(1',I)} = K_2'$ corresponding to the states transported after the simple wave has completely passed through $p(t)$.  In (a), the thick red curve depicts the new vertex position $y = s(t)$ while the dashed curve is the original vertex $y = p(t)$.  In (b), the partial soliton terminates at zero amplitude, denoted by the dashed curve, with the red curve displaying the new vertex position $y = s(t)$.  In (a) and (b), $y = s(t)$ connect the characteristics from leg $j = 1'$ to leg $j = I$.}
        \label{fig:charSketch}
    \end{figure}

    Now we determine the breaking time $T$ in equation \eqref{eq:breaking_time} from Theorem \ref{th:shockCondition}.  Insertion of $c(k)$, $f(t)$, and $p(t)$ for the fast-RW, we find that it is an increasing function of $\hat{t}^{(1)}$. Consequently, the breaking time $T$, minimized over $\hat{t}^{(1)}$, is achieved when $\hat{t}^{(1)} = t_1$, i.e., the initiation of RW-boundary interaction.
    Inserting this into equation \eqref{eq:shockTime} yields the breaking time
    \begin{equation}\label{eq:tbreak}
        t_{\rm break}
        =\frac{3y_0(K_3-K_1)}
        {2(K_3-K_2)^2}.
    \end{equation}
    and location
    \begin{equation}
        \label{eq:ybreak}
        y_{\rm break} = \left ( \frac{4}{3} f(t_1) + \frac{2}{3}K_3 \right ) (t_{\rm break} - t_1) + p(t_1) ,
    \end{equation}
    that is identified in Fig.~\ref{fig:charSketch}(a) as the black dot.
    We observe that the breaking time is independent of $K_2'$, only depending on the existence of  positive increase of $k_2^{(3)}$ at $y = y_0$.  We will continue to utilize backward characteristics in our calculations.  Anticipating the regularization of gradient catastrophe by the introduction of a new leg $j = 1'$, we refer to the time $\hat{t}^{(1')}(y,t)$ from a point $(y,t)$, $y > p(t)$ and $t > t_{\rm break}$ as the backward characteristic to the vertex
    \begin{equation}\label{eq:newt1}
        y = \lambda_-(k_2^{(1')},K_3)(t-\hat{t}^{(1')}) + p(\hat{t}^{(1')}),
    \end{equation}
    that remains on leg $j = 1'$.

    \begin{figure}
        \centering
        \includegraphics{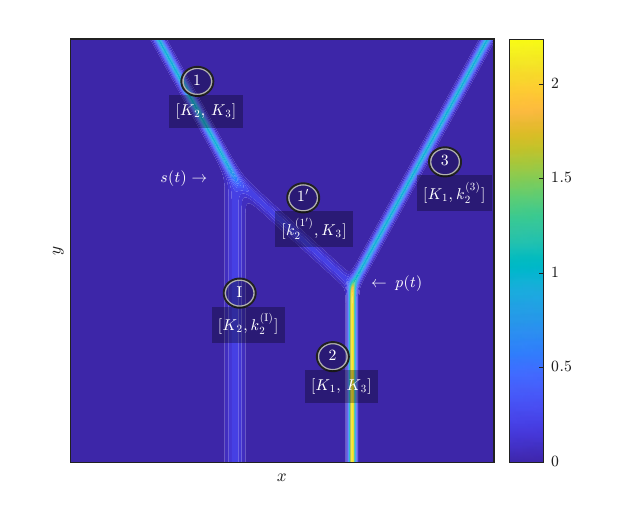}
        \caption{Leg labeling sketch for the post breaking modulation system. Leg $j=1$ from the previous labeling in Fig.~\ref{fig:YSoliLabeling} has been partitioned into three segments $j=1$, $j=1'$, and $j=I''$ that (locally) compose the inverted Y-soliton.}
        \label{fig:shockCreationDiagram}
    \end{figure}

    Now we regularize breaking by the introduction of a new leg that locally corresponds to an inverted Y-soliton (see Fig.~\ref{fig:twoYs}(b)) subject to analogous continuity conditions \eqref{eq:boundaryCont} and vertex dynamics \eqref{eq:ODEDefOfP} of the Y-soliton modulation system. To achieve this, we introduce the augmented labeling scheme depicted in Fig.~\ref{fig:shockCreationDiagram}.  While we could add the modulation equations for the newly formed inverted Y-soliton to the modulation equations for the original, upright Y-soliton, only the modulation parameters $k_2^{(1',I,3)}$ vary across the entire structure from leg $j=3$ through leg $j=1'$ to leg $j=I$.  All other modulation variables are constant and denoted in Fig.~\ref{fig:shockCreationDiagram}.

    Associated with the inverted Y-soliton is its vertex located at $y = s(t)$ for $t > t_{\rm break}$.  From this vertex are three distinct soliton legs $j\in \{1,1',I$\} as shown in Fig.~\ref{fig:shockCreationDiagram}. The equations for $k_2^{(3,1',I)}$ describing this modulation are, for $t > t_{\rm break}$, 
    \begin{subequations}\label{eq:new_mod_param_k2}
    \begin{align}
        \label{eq:new_mod_param_k2_leg3}
        \partial_t k_2^{(3)}
        +\lambda_+^{(3)}(k_1^{(3)},k_2^{(3)})
        \partial_y k_2^{(3)}
        &=0, \quad y>p(t),\\
        \label{eq:new_mod_param_k2_leg1}
        \partial_t k_2^{(1')}
        +\lambda_-^{(1')}(k_2^{(1')},k_3^{(1')})
        \partial_y k_2^{(1')}
        &=0, \quad p(t)<y<s(t),\\
        \label{eq:new_mod_param_k2_new}
        \partial_t k_2^{(I)}
        +\lambda_+^{(I)}(K_2,k_2^{(I)})
        \partial_y k_2^{(I)}
        &=0, \quad y<s(t),
    \end{align}
    \end{subequations}
    where
    \begin{equation}\label{eq:augmentedSpeedDefs}
        \begin{aligned}
            \lambda_+^{(3)}(k_1^{(3)},k_2^{(3)})
            &=\frac23 k_1^{(3)}+\frac43 k_2^{(3)},\\
            \lambda_-^{(1')}(k_2^{(1')},k_3^{(1')})
            &=\frac43 k_2^{(1')}+\frac23 k_3^{(1')},\\
            \lambda_+^{(I)}(K_2,k_2^{(I)})
            &=\frac23 K_2+\frac43 k_2^{(I)},
        \end{aligned}
    \end{equation}
    subject to the analogous continuity conditions \eqref{eq:boundaryCont}, which include 
    \begin{equation}
         \label{eq:inverted_Y_continuity}
        k_2^{(1')}(s(t),t) = k_2^{(I)}(s(t),t), \quad k_2^{(1')}(p(t),t) = k_2^{(3)}(p(t),t),
    \end{equation}
    and analogous vertex dynamics \eqref{eq:ODEDefOfP}
    \begin{equation}
\label{eq:inverted_Y_vertex_speed}
        \dot{s} = \frac{2}{3}(K_2 + k_2^{(1')}(s,t)+K_3), \quad s(t_{\rm break}) = y_{\rm break} .
    \end{equation}
    The initial condition on leg $j = I$ is
     \begin{equation}
     \label{eq:Iprime_k2_IC}
    k^{(I)}_2(y,t_{\rm break})= K_2, \quad y < s(t_{\rm break})
    \end{equation}
    which corresponds to zero amplitude on that leg.  For proper bookkeeping, we acknowledge the label changing with the data
    \begin{equation}
        \label{eq:1prime_k2_IC}
        k_2^{(1')}(y,t_{\rm break}^+) = k_2^{(1)}(y,t_{\rm break}^-), \quad p(t_{\rm break}) < y < s(t_{\rm break}) .
    \end{equation}

We can solve the system \eqref{eq:new_mod_param_k2} subject to \eqref{eq:inverted_Y_continuity}, \eqref{eq:inverted_Y_vertex_speed}, \eqref{eq:Iprime_k2_IC}, \eqref{eq:1prime_k2_IC} in a similar vein as before, using the method of characteristics.  The following boundary data for equation \eqref{eq:new_mod_param_k2_new}
    \begin{equation}
        k^{(I)}_2 \left(s(t),t\right) = g(t) = f(\hat{t}^{(1')}(s(t),t)), \quad t > t_{\rm break},
    \end{equation}
    where $f$ is defined in \eqref{eq:boundaryDataDefinition}.
        Solving this with the method of characteristics yields
        \begin{equation}
            k^{(I)}_2 (y,t)= \begin{cases}
            K_2 & y \leq s(t_\text{break})+2K_2(t-t_{\rm break})\\
            g(\hat{t}^{(I)}(y,t)) & s(t_\text{break})+2K_2(t-t_{\rm break}) < y < s(t)
            \end{cases}
        \end{equation}
        where $\hat{t}^{(I)}(y, t)$ is defined implicitly by
        \begin{equation}\label{eq:MOCdefinitionForNewLeg}
            y  = \left ( \frac{4}{3}g(\hat{t}^{(I)}) + \frac{2}{3}K_2 \right )(t-\hat{t}^{(I)}) +s(\hat{t}^{(I)}) .
        \end{equation}

        The solution can be simplified. Define $\tau(t) =\hat{t}^{(1')}(s(t),t)$ according to $k_2(p(\tau),\tau) = g(t)$, which is the time that the value $g(t)$ first reaches the original vertex $p(t)$. That is the characteristic corresponding to $g(t)$ reached the original Y-soliton vertex at $(y,t)=(p(\tau(t)),\tau(t))$.  We can obtain an explicit expression for $\tau(t)$ by using both the vertex dynamics \eqref{eq:inverted_Y_vertex_speed} and the implicit definition of $\hat{t}^{(1')}$ in \eqref{eq:newt1}. Evaluating $\hat{t}^{(1')}(s(t), t)=\tau(t)$ in expression \eqref{eq:newt1} and differentiating provides another representation of $\dot{s}(t)$. Equating this with the new vertex dynamics \eqref{eq:inverted_Y_vertex_speed} results in an ODE for $\tau(t)$, which upon solving with suitable conditions yields 
        \begin{equation}\label{eq:tauTot}
            \tau(t) =\frac{K_3-K_2}{K_3-K_1}t.
        \end{equation}
        Now, consider
        \begin{equation}\label{eq:newVertexAsDistanceTraveled}
            s(t)=y_0+
            \lambda_+^{(3)}(K_1,g(t))\tau(t)+
            \lambda_-^{(1')}(g(t),K_3)(t-\tau(t)),
        \end{equation}
        representing the position of the new vertex $s(t)$ as the sum of the distances traveled on leg $j=3$ and $j=1'$. Utilizing the corresponding definition of respective velocities and eq.~\eqref{eq:tauTot} for $\tau(t)$, eq.~\eqref{eq:newVertexAsDistanceTraveled} reduces to 
        \begin{equation}\label{eq:miracleResult}
            s(t)=y_0+
            \left(\frac43g(t)+\frac23K_2\right)t=y_0+t\lambda^{(I)}_+\left( K_2, g(t) \right).
        \end{equation} 
        In obtaining eq.~\eqref{eq:miracleResult}, we find the ``averaging identity''
        \begin{equation}
            \label{eq:averagingIdentity}
            \frac{K_1\tau(t)+K_3(t-\tau(t))}{t}=K_2.
        \end{equation}
        Evaluating the slow rarefaction solution initiated on leg $j=3$ at the vertex $y = p(\tau)$, we obtain 
        \begin{equation}
            \label{eq:goft}
            g(t) = k_2^{(3)}(p(\tau(t)),\tau(t)), \quad t > t_{\rm break},
        \end{equation}
        which uniquely specifies all terms in eq.~\eqref{eq:miracleResult}.
        The characteristic on leg $j = 3$ that propagates with the value $g(t)$ has velocity $\lambda_+(K_1,g(t))$ and its continuation on leg $j = 1'$ has velocity $\lambda_-(g(t),K_3)$.  Despite the fact that these two velocities differ, the averaging identity \eqref{eq:averagingIdentity} allows us to express the total \emph{displacement} from the initial point of the solution value $g(t)$ along these two characteristics at time $t$ as the average characteristic velocity multiplied by time $\lambda_+(K_2,g(t)) t$.  This explains eq.~\eqref{eq:miracleResult} and provides intuition that allows for algebraic simplification of the rest of the solution.

        Using \eqref{eq:miracleResult}, the characteristic equation \eqref{eq:MOCdefinitionForNewLeg} simplifies to
        \begin{align}
            y &= \lambda_+^{(I)}(K_2,g(\hat{t}^{(I)}))
                 (t-\hat{t}^{(I)})+s(\hat{t}^{(I)}) \notag\\
              &= y_0+\lambda_+^{(I)}(K_2,g(\hat{t}^{(I)}))t.
        \end{align}
        Since $k_2^{(I)}(y,t)=g(\hat{t}^{(I)})$, we can eliminate the implicit characteristic emission time $t^{(I)}$ and solve for $k_2^{(I)}(y,t)$ to obtain
        \[
         k_2^{(I)}(y,t)=-\dfrac12K_2+\dfrac{3(y-y_0)}{4t}
                \quad \text{for } y_0+2K_2t<y<s(t),
        \]
        Thus, our final solution for the new leg is simply expressed as the fast-RW
        \begin{equation}\label{eq:newLegExplicitSolution}
        \begin{aligned}
        k_2^{(I)}(y,t)
        &=
        \begin{cases}
        K_2,
        & y-y_0\leq \lambda_+^{(I)}(K_2,K_2)t,\\[4pt]
        \frac{3(y-y_0)}{4t}-\frac12K_2,
        & \begin{aligned}[t]
          \lambda_+^{(I)}(K_2,K_2)
          &< y-y_0< \min\bigl\{s(t)-y_0,\\
          &\quad\qquad\qquad \lambda_+^{(I)}(K_2,K_2')t\bigr\},
          \end{aligned}\\
        K_2',
        & \lambda_+^{(I)}(K_2,K_2')t
        \leq y-y_0<s(t)-y_0,\quad t>t_3.
        \end{cases}
        \end{aligned}
        \end{equation}
        for $t\geq t_{\rm break}$ and $y<s(t)$, where $t_3$ is the time at which the RW has completely passed through the new vertex.
        The remaining unknowns are provided by using eq.~\eqref{eq:tauTot} to obtain
        \begin{equation}\label{eq:newVertexPos}
        s(t)=
        \begin{cases}
        y_0+\left(\dfrac43K_3+\dfrac23K_2\right)t
        -2\sqrt{\dfrac{2y_0(K_3-K_1)t}{3}},
        & t_{\rm break}\leq t\leq t_3,\\[8pt]
        s(t_3)+\dfrac23(K_2+K_2'+K_3)(t-t_3),
        & t>t_3.
        \end{cases}
        \end{equation}
        The RW passes through the vertex $y = s(t)$ when the final RW state $K_2'$ reaches
        it, i.e., when $\tau(t_3)=t_2$, yielding
        \begin{equation}
        \label{eq:t3}
        t_3=\frac{K_3-K_1}{K_3-K_2}t_2
        =\frac{3y_0(K_3-K_1)}{2(K_3-K_2')^2}.
        \end{equation}

        Again, \eqref{eq:miracleResult} implies that all characteristics emitted from the new vertex as the RW passes through it, when extended backward, intersect at $(y_0,0)$. Thus, as seen by solution \eqref{eq:newLegExplicitSolution}, the new leg forms a centered rarefaction wave despite being generated by a shock at $t>t_{\rm break}$. Particularly, the result is as if leg $j=I$ was extended in $y$, and was given fast-RW data centered at $y=y_0$.  thus, leg $j = I$ can be viewed as a partial line soliton as defined in \cite{ryskamp_evolution_2021}. The corresponding amplitude and slope of the newly generated
        leg $j=I$ follow from the line-soliton relations
        \[
            a^{(I)}=\frac14\left(k_2^{(I)}-K_2\right)^2,
            \qquad
            q^{(I)}=\frac12\left(K_2+k_2^{(I)}\right).
        \]
        Substituting \eqref{eq:newLegExplicitSolution}, we obtain
        \begin{equation}\label{eq:newLegAmplitude}
        a^{(I)}(y,t)=
        \begin{cases}
        0,
        & y-y_0\leq 2K_2t,\\[4pt]
        \dfrac{9}{64t^2}(y-y_0-2K_2t)^2,
        & \begin{aligned}[t]
        2K_2t &< y-y_0\\
        &<\min\bigl\{s(t)-y_0,
        \lambda_+^{(I)}(K_2,K_2')t\bigr\},
        \end{aligned}\\[8pt]
        \dfrac14(K_2'-K_2)^2,
        & \begin{aligned}[t]
        \lambda_+^{(I)}(K_2,K_2')t
        &\leq y-y_0<s(t)-y_0,\\
        &t>t_3,
        \end{aligned}
        \end{cases}
        \end{equation}
        and
        \begin{equation}\label{eq:newLegSlope}
        q^{(I)}(y,t)=
        \begin{cases}
        K_2,
        & y-y_0\leq 2K_2t,\\[4pt]
        K_2+\dfrac{3}{8t}(y-y_0-2K_2t),
        & \begin{aligned}[t]
        2K_2t &< y-y_0\\
        &<\min\bigl\{s(t)-y_0,
        \lambda_+^{(I)}(K_2,K_2')t\bigr\},
        \end{aligned}\\[8pt]
        \dfrac12(K_2+K_2'),
        & \begin{aligned}[t]
        \lambda_+^{(I)}(K_2,K_2')t
        &\leq y-y_0<s(t)-y_0,\\
        &t>t_3.
        \end{aligned}
        \end{cases}
        \end{equation}
        Indeed, the new partial soliton leg exhibits a
        quadratically varying amplitude and a linearly varying
        slope at a given time. 
        
        The new leg $j=I$ is formally degenerate with zero
        amplitude for $t < t_{\rm break}$.
        For $t > t_{\rm break}$, the amplitude vanishes at its lower boundary,
        $y=y_0+2K_2t$, and increases toward the new vertex.
        For $t > t_3$, a constant-amplitude and constant-slope region
        develops adjacent to the vertex, corresponding to the
        line-soliton parameters $(K_2,K_2')$.
        
        For $t > t_3$, the vertex \eqref{eq:inverted_Y_vertex_speed} moves with constant speed
        \begin{equation}
            \label{eq:inverted_Y_vertex_position}
            s(t) = \frac{2}{3}(K_2+ K_2' + K_3)(t-t_3) + y_3
        \end{equation}
        where 
        $
        y_3=s(t_3)=y_0+\left(\frac43K_2'+\frac23K_2\right)t_3.
        $
        For $t>t_3$, the vertex propagates as an unmodulated Y-soliton vertex with soliton structure locally resembling a Y-soliton around the new resonant vertex $s(t)$.

    \begin{figure}
        \includegraphics[width=0.98\linewidth]{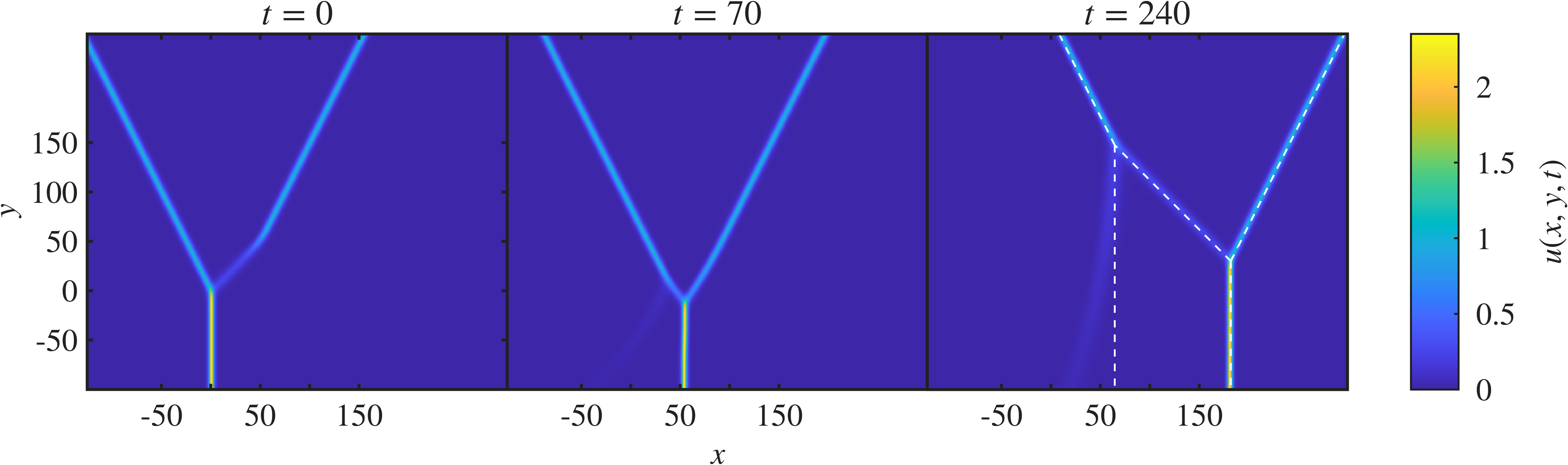}
        \caption{This figure displays the time evolution from a numerical simulation of a fast rarefaction starting from leg $j=3$ and transmitting to leg $j=1$. The relevant parameters are $K_1=-1,K_2=0, K_2'=0.75,K_3=1$, and $y_0=50$. The dashed line shows an unmodulated 3421 soliton line graph with parameters $(k_1,k_2,k_3, k_4) = (-1.5, -0.5, 0.5, 1.5)$ for comparison.}
        \label{fig:fastRWfromj3Smaller}
    \end{figure}

   In summary, we have obtained the regularization of gradient catastrophe in the KP, upright Y-soliton modulation system by the introduction of another leg and the creation of a locally resonant inverted Y-soliton structure.  This is a combined modulation system for the inverted and upright Y-solitons, which maps to a modulated 3421 soliton solution according to Kodama's classification scheme \cite{kodama_kp_2010}.  See the Appendix \ref{sec:3421_soliton}. 
    This implies that the dimensionality of the system has changed from a 3 parameter family modulated Y-soliton to a 4 parameter family modulated 3421 soliton, with $(K_1, K_2,K_2',K_3)$ characterizing the solution in the vicinity of the two vertices.

This construction is equivalent to the formulation described in \cite{ryskamp_modulation_2022,han_regularizations_2026}, where the Y-soliton can also be interpreted as a shock solution of the modulation equations satisfying Rankine Hugoniot conditions for which certain Riemann invariants exhibit a jump discontinuity at the vertex.  For uniqueness, one must appeal to the admissibility condition:  a modulation shock is admissible if it corresponds to the zero dispersion limit of a traveling wave solution of the KP equation.  This is the analog of the vanishing viscosity limit in conservation law theory for the admissibility of weak solutions to hyperbolic conservation laws \cite{Dafermos2016}. The regularization carried out above is interpreted in this way where the Rankine Hugoniot conditions and admissibility result in the vertex (shock) velocity \eqref{eq:inverted_Y_vertex_speed} and the first continuity condition in  \eqref{eq:1prime_k2_IC}.

    \subsubsection{Asymptotically singular case}
    We next briefly consider when $K_2' =K_3$. That is, consider System \eqref{eq:mod_param_k3}--\eqref{eq:mod_param_k1} subject to initial conditions \eqref{eq:fastRWj3IC} that satisfy 
    \begin{equation}\label{eq:fastRWj3OrderingSingular}
        K_1<K_2<K_3\leq K_2'.
    \end{equation}
     Much of the analysis is analogous to that of Section \ref{sec:leg2AsymptoticSingular}. On leg $j=3$, the parameters $k_1^{(3)} = K_1 < K_2 \le k_2^{(3)} \le K_2'$ are well-ordered.

    For $K_2' \ge K_3$, the first interaction time \(t = t_1\) of the rarefaction with the vertex $y = p(t)$ remains finite and positive.  By comparing the vertex speed $\dot{p}= \frac{2}{3}(K_1 + k_2^{(3)}+K_3)$ with the characteristic velocity $\lambda_+^{(3)}(K_1,k_2^{(3)})$, we observe that $\dot{p} < \lambda_+^{(3)}(K_1,K_2')$ so that the rarefaction wave originating on leg \(j=3\) reaches the vertex, but never fully passes through it. The solution of the modulation equations is the same as that presented in the previous subsection.  The only difference is that $t_2, t_3 \to \infty$ as $K_2' \to K_3$ (see eqs.~\eqref{eq:t1t2New}, \eqref{eq:t3}), which implies that the originating RW never fully passes through $y = p(t)$. Moreover, one can analogously show that the value transmitted to the vertex satisfies
    \[
    k^{(3)}_2(p(t),t)\le K_3,
    \]
    so that the potentially order-violating portion of the rarefaction never reaches leg \(j=1\).

    We briefly discuss the qualitative behavior of the ensuing dynamics, with figure \ref{fig:shockCreationDiagram} providing helpful visual intuition for this discussion.
    A key difference from Section \ref{sec:leg2AsymptoticSingular}, however, is that a new leg is formed in this case. Thus, when \(k_2^{(3)} \to K_2'=K_3\), the long-time asymptotic state is not just the single line soliton composed of legs $j=2$ and $j=3$, each with parameters $(K_1,K_3)$, but also includes a partial line soliton composed of legs $j = 1$ and $j = I$, with parameters $(K_2, K_3)$.  Leg $j=1'$ vanishes.  If $K_2' > K_3$, then this asymptotic separation occurs but now, legs $j=2$ and $j=3$ correspond to a modulated line soliton with the vertex velocity $\dot{p}$ degenerating to the characteristic velocity $\lambda_+^{(3)}(K_1,K_2')$.  See Figure \ref{fig:newLegSpecial} for a numerical simulation of this case.
    \begin{figure}
        \includegraphics[width=0.98\linewidth]{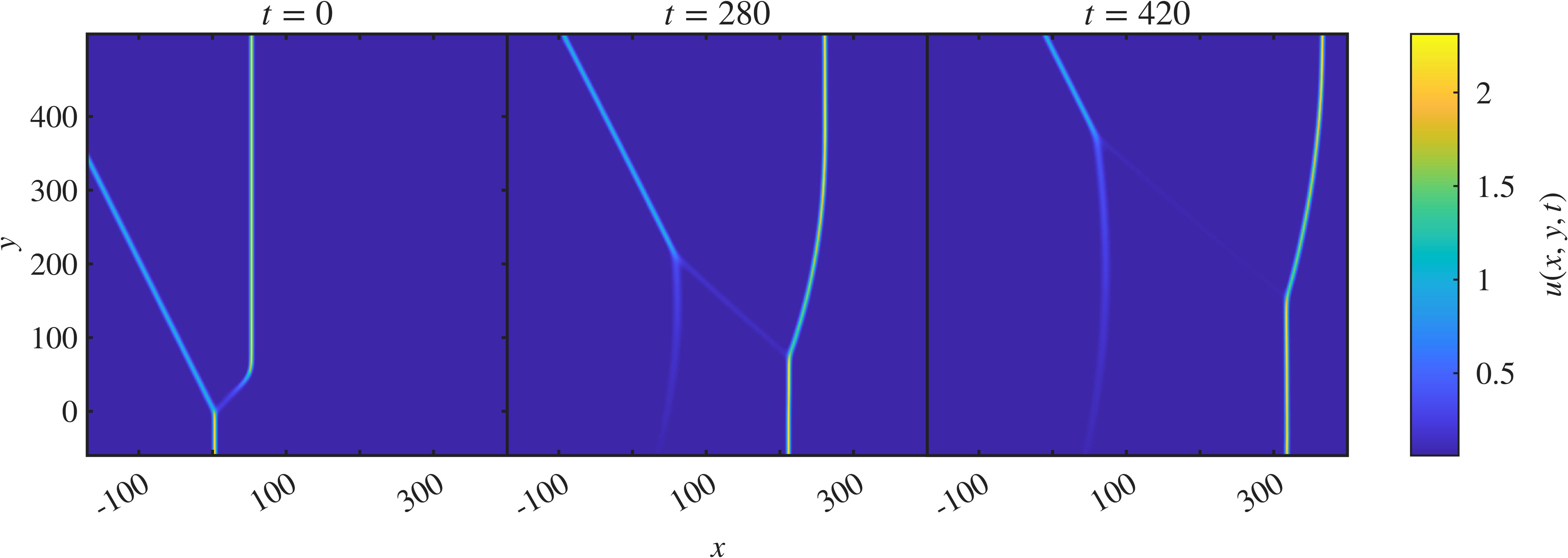}
        \caption{This figure displays the time evolution from a numerical simulation of a fast rarefaction starting from leg $j=3$ and transmitting to leg $j=1$. The relevant parameters are $K_1=-1.5,K_2=-0.5, K_2'=1.5,K_3=1.5$, and $y_0=50$. The amplitude of of leg $j=1'$ becomes comparable to background noise in the simulation.}
        \label{fig:newLegSpecial}
    \end{figure}

    \subsection{Summary of Y-soliton rarefaction modulation solutions}

    Table \ref{tab:transmit_receive} summarizes the six distinct rarefaction solutions, only three of which pass through the Y-soliton vertex $y = p(t)$.  The characteristic family (fast or slow), the transmitting/receiving legs, and whether a new leg is formed are indicated.
    \begin{table}[h!]
        \centering
        \begin{tabular}{|c|c|c|c|}
        \hline
         & \multicolumn{3}{c|}{\textbf{Receiving Leg}} \\ 
        \hline
        \textbf{Transmitting Leg} & 1 & 2 & 3 \\ 
        \hline
        1 & slow \& fast & N/A & N/A \\ 
        \hline
        2 & fast & slow & N/A \\ 
        \hline
        3 & fast with new leg & slow & N/A \\ 
        \hline
        \end{tabular}
        \caption{Summary of rarefaction wave solutions for a modulated Y-soliton. For a given transmitting leg, the table identifies the characteristic family (slow or fast) and receiving leg of the rarefaction solution.}
        \label{tab:transmit_receive}
    \end{table}

\section{Modulations of X-soliton traveling wave solutions}
\label{sec:xSoliModulations}
In this section, we will classify the rarefaction wave solutions for the O-type and P-type X-solitons (cf.~Section \ref{sec:x_soliton}). 

We can formulate the modulation system in a similar manner to how we formulated the Y-soliton system in Sec.~\ref{sec:modul-y-solit} for which each leg is modulated separately, with the vertex serving as a boundary between them.  However, the O-type and P-type X-solitons conserve both phase parameters across the vertex, as opposed to conserving just one, rendering the boundary cosmetic.  In short, the modulations simply pass through the vertex and continue on the leg with the shared phase parameters.  That is, the modulation system for X-solitons can be treated as two independent modulation systems for line solitons, with a trivial vertex essentially serving as a label for the intersection of the line solitons.

Consequently, the modulation equations themselves do not introduce a new coupled characteristic structure at the vertex. For this reason, the linearized well-posedness of this system is trivial, as there will always be four characteristics flowing into the vertex and four characteristics flowing out. The essential qualitative behavior can instead be understood by comparing the characteristic speeds of the composing line solitons with the velocity of their intersection. In a moving frame with the vertex, these relative velocities determine the directions of information propagation along the different legs.

While the mathematical solutions can be described by eqs.~\eqref{eq:1-RW} and \eqref{eq:2-RW}, there are some interesting qualitative distinctions, which we will describe below. 

For strictly ordered phase parameters $(k_1<k_2<k_3<k_4)$, the O-type vertex velocity always lies between the two characteristic velocities associated with the $[k_1,k_2]$ and $[k_3,k_4]$ line solitons. Thus, the characteristic ordering relative to the vertex remains fixed. Furthermore, note the vertex velocity given in \eqref{eq:O_type_yvel} is always finite, which prevents the geometric degeneracy associated with parallel line solitons. See Figure \ref{fig:OTypeMod} for a direct numerical simulation of an O-type modulation passing through the vertex.

In contrast, the P-type vertex velocity does not possess a fixed ordering relative to the characteristic velocity of its composing line soliton. Varying the phase parameters can produce different characteristic configurations relative to the vertex. Furthermore, the vertex velocity defined in \eqref{eq:P_type_yvel} becomes singular when $k_1+k_4=k_2+k_3$, corresponding geometrically to the composing line solitons becoming parallel. In this limit, the intersection of the two line solitons is lost, corresponding to the two line solitons separating from one another. See Figure \ref{fig:PTypeParallel} for a direct numerical simulation of this case. Note that this singularity reflects a degeneracy of the geometric vertex description rather than a singularity of the underlying KPII solution, and should be interpreted as the vertex disappearing, rather than a singularity that needs to be regularized.

Lastly, another notable qualitative behavior is the deletion of a given leg, corresponding to the phase parameters coalescing with one another, causing the amplitude to vanish.  For the O-type soliton, either the $[k_1,k_2]$ or $[k_3,k_4]$ soliton can be deleted by taking $(k_1\to k_2)$ or $(k_3\to k_4)$, respectively. The corresponding deletion boundaries propagate at the limiting characteristic velocity $(s_{\mathrm{del}}=2k_*)$, which is always less than the vertex velocity for the $[k_1,k_2]$ deletion and greater for the $[k_3,k_4]$ deletion See Figure \ref{fig:OTypeDel} for a direct numerical simulation of a deletion propagating through the vertex. For the P-type soliton, only the inner $[k_2,k_3]$ soliton can be independently deleted while respecting the ordering of the phase parameters, since $(k_1\to k_4)$ would require all four parameters to coalesce. Unlike the O-type case, the relative propagation direction of the P-type deletion boundary depends on the phase parameters: it travels slower than the vertex when $k_2-k_1>k_4-k_3$, and faster when the inequality is reversed. The transition between these two cases occurs at $k_1+k_4=k_2+k_3$, where the P-type vertex velocity becomes singular as the constituent solitons become parallel. Thus, deletion boundaries exhibit fixed relative propagation directions for O-type solitons, but parameter-dependent directions for P-type solitons. 

This formulation of X-solitons as independent line soliton modulation system is supported by direct numerical simulations of X-soliton modulations in KP II.  See Figs.~\ref{fig:OTypeMod}, \ref{fig:PTypeParallel}, and \ref{fig:OTypeDel},
which exhibit good agreement with the predicted geometry.

\begin{figure}
    \centering
    \includegraphics[width=0.98\linewidth]{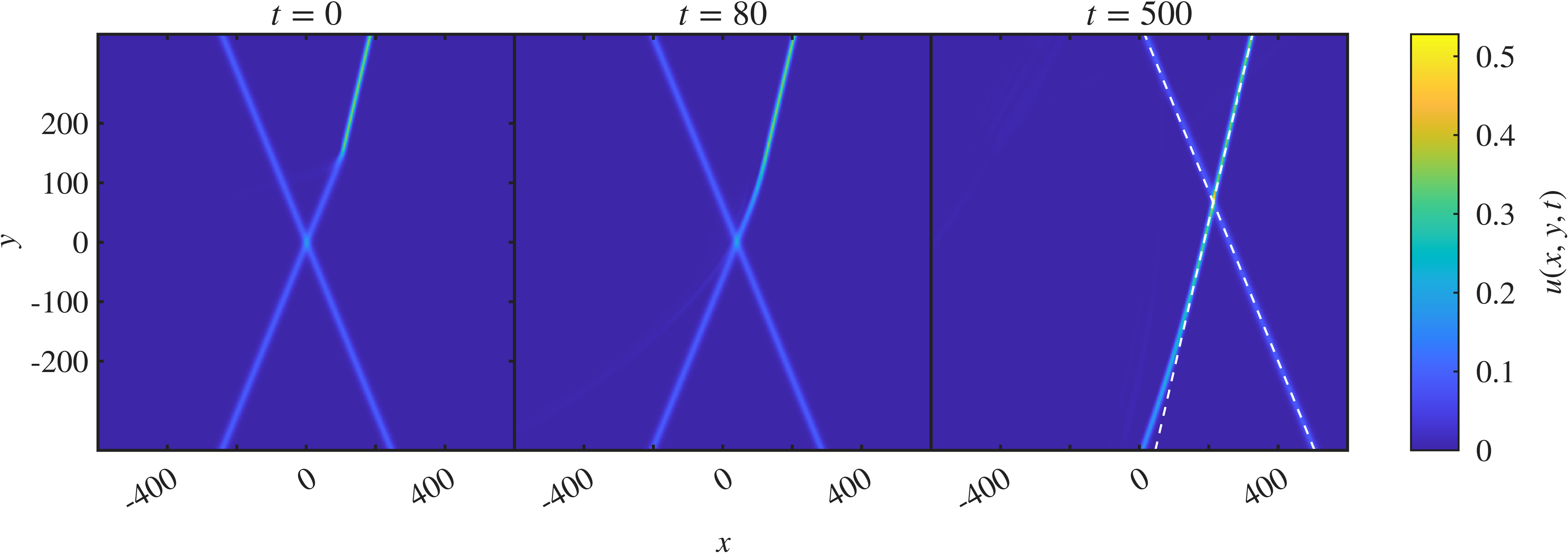}
    \caption{Numerical simulation of an O-type soliton subject to a fast rarefaction wave modulation. The dashed lines show the expected limiting cases of two line soliton modulations with the same parameters.}
    \label{fig:OTypeMod}
\end{figure}

\begin{figure}
    \centering
    \includegraphics[width=0.98\linewidth]{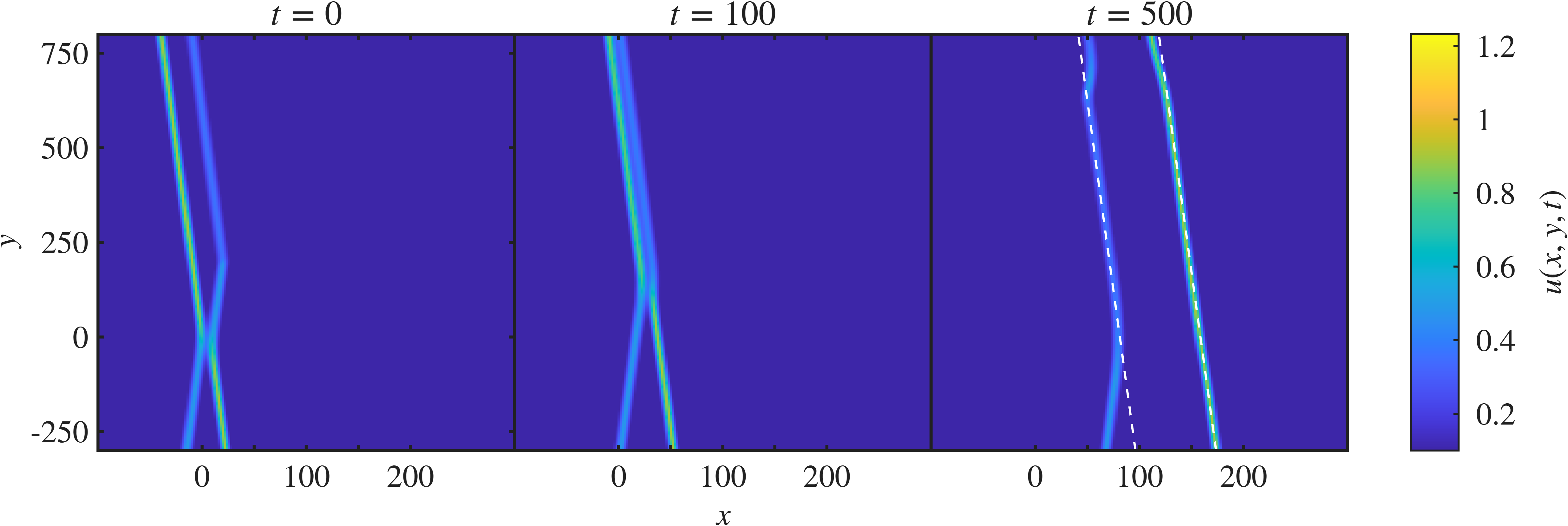}
    \caption{Numerical simulation of a P-type soliton subject to a slow rarefaction modulation that results in the separation of its constituent legs. The dashed line shows the expected limiting cases the resulting separated line soliton.}
    \label{fig:PTypeParallel}
\end{figure}

\begin{figure}
    \centering
    \includegraphics[width=0.98\linewidth]{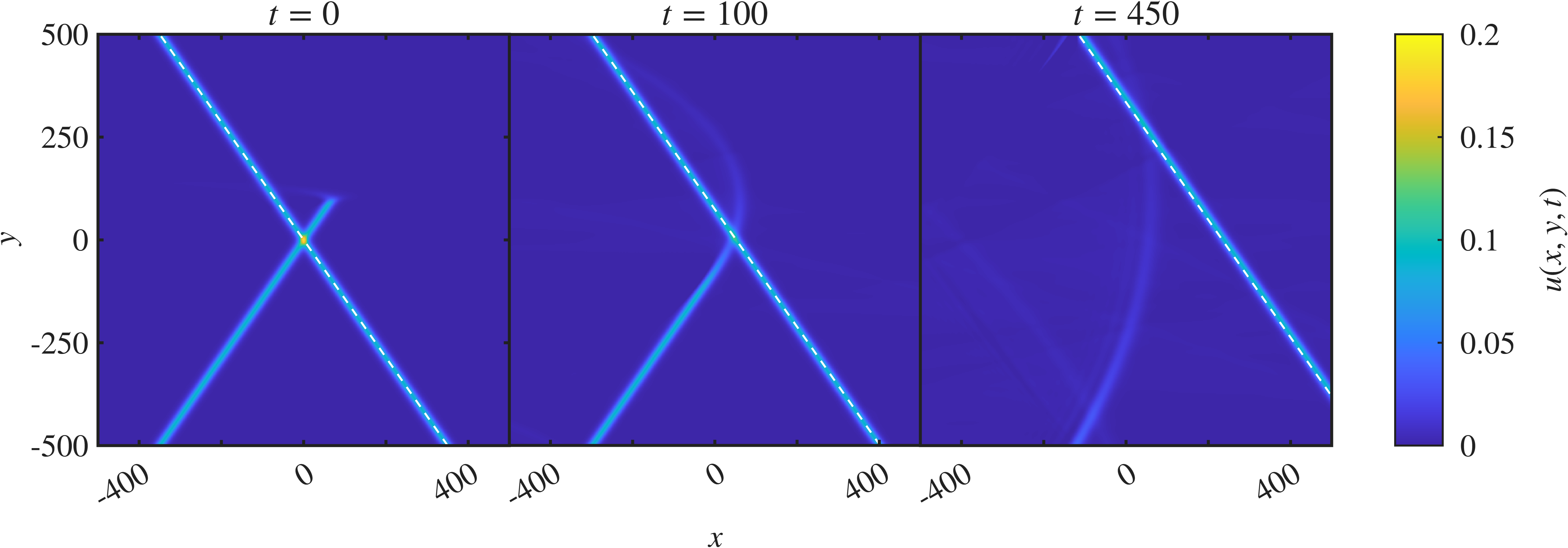}
    \caption{Numerical simulation of an O-type soliton subject to a fast rarefaction wave that leads to leg deletion. The dashed line shows the expected limiting cases the resulting single line soliton.}
    \label{fig:OTypeDel}
\end{figure}

\section{Numerical Simulations \& Comparisons}
\label{sec:direct-numer-simul}

To verify our analytical results, the KP II equation is numerically solved with a pseudospectral method described in \cite{kao_numerical_2012}. An integrating factor,  fourth-order Runge Kutta time stepper is used. Specifically, a Gaussian windowing function is used in order to solve for non-periodic data in $y$, which requires estimates of the outgoing line soliton solutions at the boundary.
The time evolution plots seen throughout Section \ref{sec:ySoliModulations} and \ref{sec:xSoliModulations} were integrated in accordance to this scheme.

In this section, we compare the theoretical predictions with direct numerical simulations of KP II. We focus on the fast rarefaction-wave case from leg $j=3$ leading to new leg formation, since it is the most delicate regime numerically. The remaining cases are more regular and show similarly good or better agreement with the theory.  The modulation solutions are evaluated explicitly when available and otherwise by characteristic tracing. From the direct numerical simulation of KP II, the local amplitude and slope information from each of the legs were extracted to compare with the theoretical modulation solutions.

\subsection{Numerical comparison for fast-RW from leg $j=3$}
\begin{figure}
    \centering
    \includegraphics{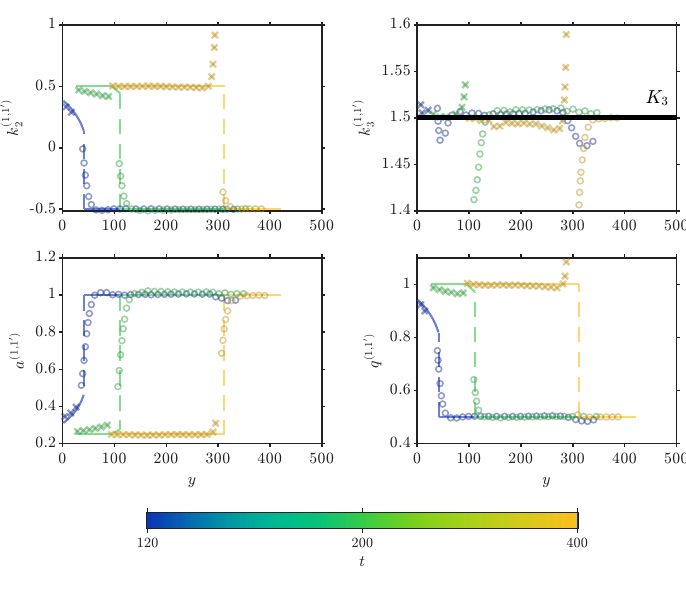}
    \caption{Analytical (solid curves) and extracted from KP II simulation ($\times$ and $\circ$) representations of modulation parameters on legs $j=1$ ($y > s(t)$) and $j=1'$ ($y < s(t)$). The dashed vertical lines represent the newly formed Y-soliton vertex $y = s(t)$, corresponding to a jump discontinuity in the parameters from $k^{(1)}_2$ to $k^{(1')}_2$ while the other phase parameter $k_3^{(1,1')} = K_3$ is conserved. The numerically extracted parameters are denoted by $\times$ and $\circ$ on legs $j =1$ and $j= 1'$, respectively. Color corresponds to time $t$.  The analytical modulation solution has been shifted in $y$ by $10$ to account for short-time, transient dynamics not captured by leading order Whitham theory.}
    \label{fig:sharedLegComparison}
\end{figure}

\begin{figure}
    \centering
    \includegraphics{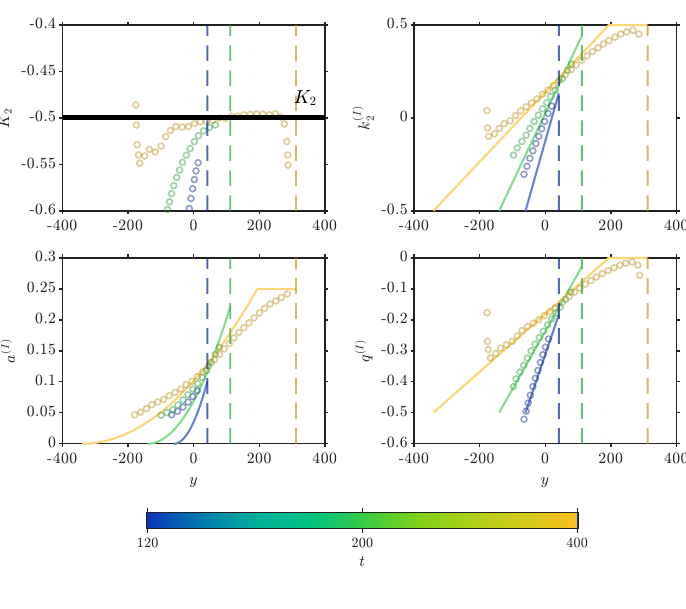}
    \caption{Analytical (solid curves) and extracted from KP II simulation ($\circ$) representations of modulation parameters on leg $j=I$ ($y < s(t)$). The dashed vertical lines represent the newly formed Y-soliton vertex $y = s(t)$. Color corresponds to time $t$.  The analytical modulation solution has been shifted in $y$ by $10$ to account for short-time, transient dynamics not captured by leading order Whitham theory.}
    \label{fig:newLegComparison}
\end{figure}

Figures \ref{fig:sharedLegComparison}--\ref{fig:newLegComparison} show the theoretically predicted and numerically extracted parameters on legs $j=1, 1', I$ during the evolution resulting from a fast-RW initiated on leg $j=3$ with initial modulation parameters $K_1=-1.5,K_2=-0.5, K_2'=0.5,K_3=1.5$, and $y_0=50$.  See section \ref{sec:j3RW} for the corresponding analysis.

On legs $j=1,1'$, the analytical prediction and numerical simulation are in good agreement away from the vertex $s(t)$, with $\dot{s}(t)$ accurately predicted.  The numerical results exhibit oscillations near $y = s(t)$, which is expected in this dispersively regularized asymptotic theory. The conserved phase parameter $k_3$ is difficult to precisely extract from numerical simulation in the vicinity of the vertex $y = s(t)$.

On the new leg $j=I$, the predicted amplitude consistently underestimates the numerically extracted amplitude, especially at early times, causing the reconstructed phase parameters to vary more. This may be due to the impact of dispersive waves generated during gradient catastrophe and the small amplitude of the generated leg. This is in contrast to the accurate slope predictions for all times, which are more robust to higher order amplitude perturbations.

Overall, the numerical simulations show good agreement with the analytical predictions, especially for long time.  Similar agreement is obtained for the other rarefaction wave modulation configurations discussed in this paper.  For brevity, we have focused on the most complex case involving gradient catastrophe for quantitative comparison to demonstrate the effectiveness of the theory.  The reported quantitative agreement supports the assumed structure of the modulated Y-soliton as three distinct, conjoined semi-infinite line solitons with a shared boundary given by the Y-soliton vertex and suitable continuity conditions.

\section{Discussion \& Conclusions}
\label{sec:conclusions}
This paper has examined the dynamical stability and nonlinear modulation dynamics of Y- and X-wave KP II solitons under long-wavelength perturbations. At the linearized level, the analysis indicates dynamical stability of the soliton structures, as reflected by the exclusively real characteristic velocities of the modulation system. Beyond the linear analysis, special rarefaction-wave solutions to the full modulation equations were analyzed. 

In the case of the Y-soliton, interesting dimensionality and structure altering behaviors were observed, namely the deletion of a soliton leg and the creation of a new leg through gradient catastrophe. Certain choices of modulation parameters lead to the asymptotic reduction of the Y-soliton into a modulated line soliton, corresponding to a transition from a three-parameter family to a two-parameter family of waves. Another family of modulation data lead to wave breaking in the Y-soliton modulation system. Following wave breaking, an augmented modulation system was introduced to regularize gradient catastrophe, providing an explicit analytical description of new leg formation.  This regularization results in another local Y-soliton vertex structure. In particular, for the fast rarefaction wave emanating from leg $j=3$, the dynamics effectively enlarge the original three-parameter family of Y-soliton solutions to a four parameter family, with $(K_1,K_2,K_2',K_3)$ all contributing to the long-time state. The asymptotic configuration in this case approaches a $(3421)$-type soliton in the classification of \cite{kodama_kp_2010}. These structural changes are mirrored by changes in the soliton graph and the regions of influence of the phase parameters. 

On the other hand, it was found that the vertices in X-solitons are trivial. That is, since both phase parameters are conserved across X-soliton vertices, the X-soliton modulation systems can be decoupled into two independent line soliton modulation systems. In this case, the vertex serves as a tracker for interesting qualitative behavior of these modulations, namely the possibility of separation of the two line solitons for the P-type soliton.

In both modulated Y-soliton and modulated X-soliton cases, the theoretical predictions are supported by numerical simulations, which exhibit good quantitative agreement in regimes of sufficiently large amplitude.

Recent results in \cite{Mizumachi2026} on spectral stability of the Y-soliton solution of KP II provide a rigorous foundation for our postulated description of a perturbed Y-soliton.  Namely, the tri-legged, independently modulated structure with a shared boundary and suitable continuity conditions.  The modulation characteristic velocities presented here are precisely the long wavelength limits of the group velocities developed in \cite{Mizumachi2026}.  Our approach, however, goes well beyond the linearized perturbation level and allows for the propagation of nonlinear modulations and their geometric implications.

Taken together, these findings may be used to describe more complex KP II soliton structures such as web-type solutions \cite{kodama_kp_2010}. Local reductions of the soliton graph to either resonant Y-soliton or trivial X-soliton vertices can be made in order to predict how a modulation evolves along the soliton graph.  The 3421-soliton that emerges during the course of a particular Y-soliton modulation serves as a representative example of this strategy.  Furthermore, these findings reinforce the importance and generality of the resonant Y-soliton vertex in the KP II system from a modulational perspective and, more broadly, in applications such as shallow water waves \cite{ablowitz_nonlinear_2012}. 
Not only can a shock in the line soliton modulation system produce a Y-resonant structure \cite{ryskamp_modulation_2022, han_regularizations_2026}, but as found in this paper, a family of initially expansive wave initial data to the Y-soliton modulation can lead to the creation of a new, locally Y-resonant structure.

\backmatter

\bmhead{Acknowledgements}

The authors are grateful to Sarbarish Chakravarty and Gino Biondini for helpful discussions and intuition on KP II multi-soliton solutions.  The authors acknowledge support from NSF-DMS 2306319.

\begin{appendices}

\section{Gradient catastrophe in Y-soliton modulations}
\label{sec:appendix_grad_catastrophe}

The following theorem and corollary are described and utilized in Section \ref{sec:modul-y-solit}.

    \begin{theorem} \label{th:shockCondition}
        Let $k=k(y,t)$ satisfy
        \begin{equation} \label{eq:generalShockEquationAnalysis}
            \frac{\partial k}{\partial t} + c(k)\frac{\partial k}{\partial y} = 0 \quad y>p(t) \text{ or } y<p(t),
        \end{equation}
        with moving boundary \(y=p(t)\) and boundary condition
        $k(p(t), t) = f(t).$  Furthermore, assume $c \in C^1(\mathbb{R})$ with $c'(k)>0$, 
        $p,f \in C^1(I)$ on an open interval $I\subset\mathbb{R}$ with bounded  velocity $\dot{p}$
        \[
        V_- \le \dot{p}(t) \le V_+ \quad  t\in I,
        \]
        for some $V_\pm \in \mathbb{R}$.
         If the domain of equation \eqref{eq:generalShockEquationAnalysis} is $y>p(t)$. Gradient catastrophe occurs if there exists a $t^* \in I$ such that
        \begin{equation}\label{eq:aboveVertexShockCondition}
         f'(t^*) > 0\;\;\text{and}\;\;c(f(t^*)) > V_+,
        \end{equation}
        or, if the domain of equation \eqref{eq:generalShockEquationAnalysis} is $y<p(t)$
        \begin{equation}\label{eq:belowVertexShockCondition}
         f'(t^*) < 0\;\;\text{and}\;\;c(f(t^*)) <V_-.
        \end{equation}
        In both cases, the breaking time $t_{\text{break}}$ is
        \begin{subequations}\label{eq:shockTime}
            \begin{align}
            &T(t^*) = t^* + \frac{c(f(t^*))-\dot{p}(t^*)}{c'(f(t^*))\, f'(t^*)}, \\
            &t_{\text{break}} = \inf_{t^*\in A} T(t^*).
            \end{align}
        \end{subequations}
        where $A$ is the set of $t^*$ that satisfies conditions \eqref{eq:aboveVertexShockCondition}, \eqref{eq:belowVertexShockCondition} respectively.
    \end{theorem}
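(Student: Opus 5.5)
The plan is to use the method of characteristics with characteristics emitted from the moving boundary, and to detect the first time at which two neighbouring characteristics cross. I treat the case $y>p(t)$ in detail; the case $y<p(t)$ follows by the reflection $y\mapsto -y$, $c\mapsto -c$, $p\mapsto -p$, which exchanges $V_\pm$ with $-V_\mp$ and flips the inequalities into \eqref{eq:belowVertexShockCondition}.

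First, for each emission time $\sigma\in I$ the boundary value $f(\sigma)$ is carried along the straight line
\[
Y(\sigma,t)=p(\sigma)+c(f(\sigma))(t-\sigma),\qquad t\ge \sigma .
\]
A characteristic enters the domain $y>p(t)$ only if $c(f(\sigma))>\dot p(\sigma)$. The hypothesis $c(f(t^*))>V_+\ge \dot p(t^*)$ guarantees this at $\sigma=t^*$. By continuity of $c\circ f$ and $\dot p$, it also holds on a neighbourhood of $t^*$. Moreover, $c(f(\sigma))>V_+\ge\dot p(t)$ for all later $t\in I$, so these characteristics never return to the boundary.

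Next I compute the Jacobian of the map $(\sigma,t)\mapsto (Y,t)$:
\[
\partial_\sigma Y=\dot p(\sigma)-c(f(\sigma))+c'(f(\sigma))f'(\sigma)(t-\sigma).
\]
At $\sigma=t^*$ the first two terms sum to a negative number, while $c'f'>0$. Hence $\partial_\sigma Y$ is negative for $t$ slightly above $\sigma$ and vanishes exactly at $t=T(t^*)$ given in \eqref{eq:shockTime}. Along a characteristic, the gradient satisfies
\[
\partial_y k = \frac{f'(\sigma)}{\partial_\sigma Y},
\]
obtained by differentiating $k(Y(\sigma,t),t)=f(\sigma)$ with respect to $\sigma$. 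Since $f'(t^*)>0$ and $\partial_\sigma Y\to 0^-$, we get $|\partial_y k|\to\infty$ as $t\to T(t^*)^-$, provided the solution is still classical up to that time. Therefore gradient catastrophe occurs no later than $T(t^*)$ for every $t^*\in A$, and no later than $\inf_{A}T$.

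The main obstacle is to show that breaking does not occur strictly earlier, so that $t_{\rm break}=\inf_A T$. For emission times $\sigma\notin A$, either $f'(\sigma)\le0$, which gives $\partial_\sigma Y<0$ for all $t$, or $c(f(\sigma))\le V_+$. In the latter case I would argue that the bound $V_-\le\dot p$, together with the local transversality condition, keeps $\partial_\sigma Y$ from vanishing, or else that such characteristics never enter the domain. This is where the setup is delicate. I will interpret $t_{\rm break}$ as the first time the Jacobian vanishes among the characteristics that produce compression, and restrict attention to the emitting branch, as the paper does. I will also note that characteristics from the initial data on the boundary have $\partial_y k$ bounded, since their initial data are $C^1$ and their speeds are nondecreasing. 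With this restriction, the minimal vanishing time of $\partial_\sigma Y$ over the family is exactly $\inf_A T$.
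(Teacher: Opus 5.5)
Your core argument is the paper's own proof. You trace characteristics emitted from $(p(t^*),t^*)$ and write $\partial_y k=f'(t^*)/D$, where $D=\dot p(t^*)-c(f(t^*))+c'(f(t^*))f'(t^*)(t-t^*)$ is your $\partial_\sigma Y$. You read off $T(t^*)$ as the zero of $D$. You then use $c(f(t^*))>V_+\ge\dot p$ to get $Y-p(t)=\int_{t^*}^{t}\bigl(c(f(t^*))-\dot p(s)\bigr)\,ds>0$, so the blow-up occurs at a positive time inside the domain.

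Three small corrections:

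\begin{enumerate}
\item The paper does not prove that no earlier breaking occurs either. It simply takes $t_{\rm break}=\inf_A T$ over boundary-emitted characteristics in $A$. You should restrict explicitly to $A$, rather than to all ``compressive'' characteristics.
\item Drop the tentative claims that characteristics outside $A$, or from the initial line, cannot focus first. The latter is not supported by any hypothesis. The former is false in general: a characteristic with $f'(\sigma)>0$ and $\dot p(\sigma)<c(f(\sigma))\le V_+$ can break inside the domain before $\inf_A T$. What you can actually prove, and did, is that catastrophe occurs no later than $\inf_A T$.
\item For $y<p(t)$, your reflection must also send $k\mapsto -k$, hence $c(k)\mapsto -c(-k)$ and $f\mapsto -f$. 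Otherwise $c'$ changes sign, and the condition $f'>0$ does not turn into $f'<0$.
\end{enumerate}
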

    \begin{proof} 
        We can prove this by simply calculating the gradient catastrophe time if it exists. 
        Using the method of characteristics, the solution of eq.~\eqref{eq:generalShockEquationAnalysis} is
        \[
        k(y,t) = f\left(t^*(y,t)\right).
        \]
        So
        \begin{equation}\label{eq:dkdyDefinition}
            \frac{\partial k}{\partial y} = \frac{df}{dt^*}\frac{\partial t^*}{\partial y},
        \end{equation}
        where the implicit definition of $t^*$ is 
        \begin{equation}\label{eq:ytstarCharDefinition}
            y=c(f(t^*))\cdot(t-t^*) + p(t^*)
        \end{equation}
        Implicitly differentiating with respect to $y$ and solving for $\frac{\partial t^*}{\partial y}$ yields
        \[
         \frac{\partial t^*}{\partial y}=\frac{1}{c'(f(t^*))\cdot f'(t^*)\cdot(t-t^*)-c(f(t^*))+\dot{p}(t^*)}
        \]
        Inserting this into relation \eqref{eq:dkdyDefinition} yields
        \[
        \frac{\partial k}{\partial y} = \frac{f'(t^*)}{c'(f(t^*))\cdot f'(t^*)\cdot(t-t^*)-c(f(t^*))+\dot{p}(t^*)}
        \]
        Now, provided that $f'(t^*) \neq 0$, a singularity appears when 
        \[
        c'(f(t^*))\cdot f'(t^*)\cdot(t_\text{break}-t^*)-c(f(t^*))+\dot{p}(t^*) = 0.
        \]
        Note that $t^*$ may become multivalued after gradient catastrophe. For the purposes of this calculation, we consider only characteristics originating from the moving boundary $y=p(t)$, rather than from the initial line $t=0$. Specifically, $t^*$ denotes the time at which a characteristic originates from $(p(t^*),t^*)$, and we require that the characteristic remains in the prescribed domain for $t^*<t\leq T(t^*)$. That is,
        \[
        c(f(t^*))(t-t^*)+p(t^*)>p(t)
        \]
        for $y>p(t)$, with the inequality reversed for $y<p(t)$. Note that the characteristics themselves remain parametrically defined by their originating times $t^*$, allowing us to determine the candidate breaking time without requiring $t^*(y,t)$ to remain single-valued beyond breaking.

        Solving for our set of proposed breaking times $T$ and our breaking time $t_\text{break}$ yields
        \begin{equation}
           \begin{split}
            &T(t^*) =  t^*+\frac{c(f(t^*))-\dot{p}(t^*)}{c'(f(t^*))\cdot f'(t^*)}\\
            &t_{\text{break}} = \inf_{t^*\in A} T(t^*).
            \end{split}
            \label{eq:breaking_time}
        \end{equation}
        Let's denote $t^*_b \in A$ as the corresponding value of $t^*$ that achieves the breaking time if it exists. 
        
        Now we have to enforce that the breaking time is physical, that is we require $t_\text{break} - t^*_b >0$ and that the break position $y_\text{break}$ lies within the domain $y_\text{break} - p(t_\text{break}) >0$ for the case of \eqref{eq:aboveVertexShockCondition} and $y_\text{break} - p(t_\text{break}) < 0$ for the case of \eqref{eq:belowVertexShockCondition}. Enforcing the positivity of $t_\text{break} - t^*_b$ yields the condition
        \[
            \frac{c(f(t^*_b))-\dot{p}(t^*_b)}{c'(f(t^*_b))\cdot f'(t^*_b)}>0,
        \]
        and along with the assumption that $c'(k) >0$ this yields
        \begin{equation}\label{eq:positiveBreakingCondition}
            \frac{c(f(t^*_b))-\dot{p}(t^*_b)}{f'(t^*_b)}>0.
        \end{equation}
        Now we enforce the constraint that $y_\text{break}$ must lie within the domain.
        To determine whether the breaking point lies inside the domain, note that the breaking location is, by the method of characteristics,
        \[
        y_\text{break}
        =
        p(t^*_b) + c(f(t^*_b))\cdot(t_{\mathrm{break}}-t^*_b),
        \]
        while we also note that
        \[
        p(t_{\mathrm{break}})  =p(t^*_b)+ \int_{t^*_b}^{t_{\mathrm{break}}}\dot{p}(s) ds.
        \]
        Thus, we can write the difference as
        \[
        y_{\mathrm{break}}-p(t_{\mathrm{break}})
        =
        \int_{t^*_b}^{t_{\mathrm{break}}}\bigl(c(f(t^*_b))-\dot{p}(s)\bigr)\,ds.
        \]
        Since we required $t_{\mathrm{break}}-t^*_b>0$ to determine the sign of $y_{\mathrm{break}}-p(t_{\mathrm{break}})$, it is sufficient to show that $c(f(t^*_b))-\dot{p}(s)$ has the correct sign on the interval $[t^*_b, t_\text{break}]$.

        For the case of \eqref{eq:aboveVertexShockCondition}, we require $y_{\mathrm{break}}-p(t_{\mathrm{break}})>0$, which is satisfied when 
        \[
         c(f(t^*_b)) > V_+>\dot{p}(t^*_b)\implies c(f(t^*_b))-\dot{p}(s)>\left(c(f(t^*_b))-V_+\right)>0 .
        \]
        This constraint is sufficient to prove $y_{\mathrm{break}}-p(t_{\mathrm{break}})>0$, i.e., that $y_{\rm break}$ is in the domain.  Combining this with eq.~\eqref{eq:positiveBreakingCondition}, we have $f'(t^*)>0$ and the theorem is proved for this case.
        
        Similar for the case of \eqref{eq:belowVertexShockCondition}, we require $y_{\mathrm{break}}-p(t_{\mathrm{break}})<0$, which is satisfied when
        \[
        c(f(t^*_b)) <V_-< \dot{p}(t^*_b) \implies c(f(t^*_b))-\dot{p}(s)<c(f(t^*_b))-V_-<0.
        \]
        This combined by \eqref{eq:positiveBreakingCondition} implies $f'(t^*_b)<0$, which yields the stated conditions for \eqref{eq:belowVertexShockCondition}. \qed

        \begin{corollary}
        \label{cor:noShockCriterion}
        The condition that either $f'(t) > 0$, $t \in I$ for the domain $y > p(t)$ or $f'(t) < 0$, $t \in I$ for the domain $y < p(t)$ is necessary for gradient catastrophe to occur. 
        
        \end{corollary}

        \begin{proof}
        We prove the case for \(y>p(t)\) in detail, with the case \(y<p(t)\) being analogous.
        
        We fix the boundary emission time \(t^*\in I\). The characteristic emanating from the boundary at $t = t^*$ is
        \[
        y=c(f(t^*))(t-t^*)+p(t^*),
        \]
        with $y$ derivative
        \[
        \frac{\partial k}{\partial y}
        =
        \frac{f'(t^*)}
        {D(t)}, \quad D(t) = c'(f(t^*))\,f'(t^*)\,(t-t^*)-c(f(t^*))+\dot p(t^*).
        \]
        Immediately after emission, we have
        \[
        c(f(t^*))-\dot p(t^*)>0,
        \]
        so that
        \[
        D(t^*)=-c(f(t^*))+\dot p(t^*)<0.
        \]
        Anticipating a contradiction, we assume \(f'(t^*)\le 0\). Since \(c'(f(t^*))>0\),
        \[
        D'(t)=c'(f(t^*))\,f'(t^*)\le 0.
        \]
        From this, we see that \(D(t)\) is nonincreasing and starts negative at \(t=t^*\). Thus \(D(t)\) cannot vanish for any \(t>t^*\), showing that \(\partial k/\partial y\) cannot blow up in finite forward time.
        
        Since this analysis applies to every boundary emitted characteristic, no gradient catastrophe can occur in the region \(y>p(t)\).
        
        The proof for the region \(y<p(t)\) is the same, except that
        \[
        c(f(t^*))-\dot p(t^*)<0,
        \]
        so \(D(t^*)>0\), and if \(f'(t^*)\ge 0\), then \(D'(t)\ge 0\), so again \(D(t)\) cannot cross zero for \(t>t^*\).
        \end{proof}
        Theorem \ref{th:shockCondition} gives sufficient conditions for gradient catastrophe to occur, while Corollary \eqref{cor:noShockCriterion} shows the necessity of the condition of $f'(t^*) > 0$ and $f'(t^*) < 0$ for gradient catastrophe to occur when equation \eqref{eq:generalShockEquationAnalysis} is defined for $y>p(t)$ and $y<p(t)$ respectively. 
    \end{proof}

When we apply Corollary \eqref{cor:noShockCriterion}, we see that the fast-RW from leg $j=2$ to leg $j=1$ and the slow-RW from leg $j=3$ to leg $j=1$ will fail to exhibit gradient catastrophe.  However, when solving for the fast-RW from leg $j=3$ to leg $j=1$, sufficiency for gradient catastrophe is met, as specified in Theorem \ref{th:shockCondition}.

\section{3421 soliton}\label{sec:3421_soliton}

The 3421 soliton under Kodama's classification in \cite{kodama_kp_2010} corresponds to another $(M,N)=(4,2)$ parameter solution to the multi-soliton in eq.~\eqref{eq:NSolitonSolution} parameterized by $k_1 < k_2 < k_3 < k_4$ and coefficient matrix
\begin{equation}\label{eq:3421TypeMatrix}
    A = \begin{pmatrix}
        1 & 0 & 0 &-\gamma\\
        0 & 1 & \alpha & \beta
    \end{pmatrix} \quad \alpha,\beta,\gamma >0.
\end{equation}
Unlike the previously discussed examples, the 3421 soliton is not a traveling wave, with its structure varying over time. See Figure \ref{fig:3421} for a 3421 soliton evaluated at two separate times.
\begin{figure}
    \centering
    \includegraphics[clip=true,trim=0 25 0 25]{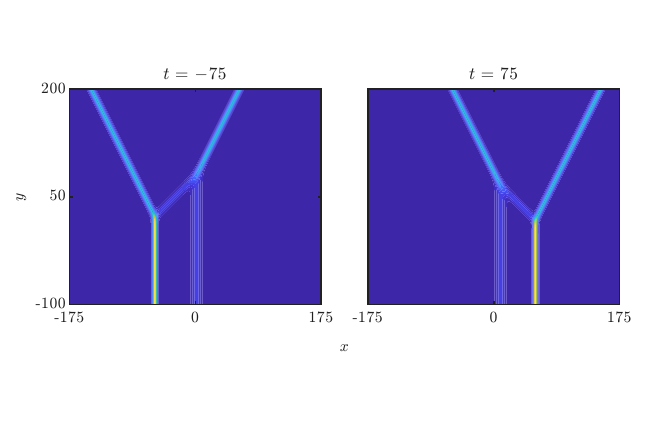}
    \caption{A 3421 soliton with parameters $(k_1,k_2,k_3,k_4) = (-1.5, -0.5, 0.5, 1.5)$ evaluated at $t=-75$ and $t=75$.}
    \label{fig:3421}
\end{figure}
Note that asymptotically, the 3421 soliton is composed of two-resonant Y-soliton vertices.

\end{appendices}

\bibliographystyle{sn-mathphys-num}
\bibliography{KP_refs}%

\end{document}